\newif
\def\BibTeX{{\rm B\kern-.05em{\sc i\kern-.025em b}\kern-.08em
    T\kern-.1667em\lower.7ex\hbox{E}\kern-.125emX}}
\DeclareMathOperator*{\esssup}{ess\,sup}
\theoremstyle{theorem}
\newtheorem{theorem}{Theorem}
\newtheorem{lemma}[theorem]{Lemma}
\newtheorem{proposition}[theorem]{Proposition}
\theoremstyle{definition}
\newtheorem{definition}{Definition}
\theoremstyle{remark}
\newtheorem{remark}{Remark}
\newtheorem{claim}{Claim}
\begin{document}

\title{On the Generalization Error of Differentially Private Algorithms via Typicality
}
\ifshortver
\author{%
  \IEEEauthorblockN{Yanxiao Liu\IEEEauthorrefmark{2},
    Chun Hei Michael Shiu\IEEEauthorrefmark{4},
    Lele Wang\IEEEauthorrefmark{4}
    and Deniz G\"und\"uz\IEEEauthorrefmark{2}} 
  \IEEEauthorblockA{\IEEEauthorrefmark{2}%
             Department of Electrical and Electronic Engineering, Imperial College London, UK\\
             y.liu2@imperial.ac.uk, d.gunduz@imperial.ac.uk
             }
  \IEEEauthorblockA{\IEEEauthorrefmark{4}%
             Department of Electrical and Computer Engineering, University of British Columbia, Canada \\
             shiuchm@ece.ubc.ca, lelewang@ece.ubc.ca}
\thanks{\textcolor{black}{This work was partially supported by UKRI under the projects AI-R (EP/X030806/1) and INFORMED-AI (EP/Y028732/1), and also supported in part by the Natural Sciences and Engineering Research Council of Canada (NSERC) Discovery Grant No. RGPIN-2019-05448.}}
}
\else
\author{%
  \IEEEauthorblockN{Yanxiao Liu\IEEEauthorrefmark{2},
    Chun Hei Michael Shiu\IEEEauthorrefmark{4},
    Lele Wang\IEEEauthorrefmark{4}
    and Deniz G\"und\"uz\IEEEauthorrefmark{2}} \\
  \IEEEauthorblockA{\IEEEauthorrefmark{2}%
             Department of Electrical and Electronic Engineering, Imperial College London, UK,\\
             yliu25@ic.ac.uk, d.gunduz@imperial.ac.uk
             }\\
  \IEEEauthorblockA{\IEEEauthorrefmark{4}%
             Department of Electrical and Computer Engineering, University of British Columbia, Canada \\
             shiuchm@ece.ubc.ca, lelewang@ece.ubc.ca}
}
\fi
\maketitle

\begin{abstract}

We study the generalization error of stochastic learning algorithms from an information-theoretic perspective, with particular emphasis on deriving sharper bounds for differentially private algorithms.
It is well known that the generalization error of stochastic learning algorithms can be bounded in terms of mutual information and maximal leakage, yielding average and high-probability guarantees, respectively.
In this work, we further upper bound mutual information and maximal leakage by explicit, easily computable formulas, using typicality-based arguments and exploiting the stability properties of private algorithms.
In the first part of the paper, we strictly improve the mutual information bounds by Rodríguez-Gálvez et al.\ (IEEE Trans.\ Inf.\ Theory, 2021).
In the second part, we derive new upper bounds on maximal leakage of learning algorithms.
In both cases, the resulting bounds on the information measures translate directly into generalization error guarantees.
\begin{IEEEkeywords}
Generalization error, differential privacy, maximal leakage, mutual information, typicality. 
\end{IEEEkeywords}
\end{abstract}

\ifshortver
\textit{A full version of this paper with proofs is accessible at \cite{liu2026typical_arxiv}. 
}\fi

\section{Introduction}
\label{sec::intro}

A randomized statistical learning algorithm $P_{W|S}$ can be defined as a probabilistic mapping from a training dataset $S := \allowbreak \left(Z_1,\ldots,Z_N \right)\in\mathcal{Z}^N$ consisting of $N$ independent and identically distributed (i.i.d.) data points from distribution $P_Z$ (which is unknown) on the space $\mathcal{Z}$ to a hypothesis $W$ from the hypothesis space $\mathcal{W}$. 
Similar to~\cite{rodriguez2021upper}, we assume $\mathcal{Z}$ is finite and $\mathcal{W}$ is discrete. 
The generalization performance of this algorithm is measured by the difference between the expected loss $\mathbf{E}_{P_Z}\left[\ell(W,Z) \right]$ and the empirical loss $\frac{1}{N}\sum^N_{i=1} \ell(W,Z_i)$, where $\ell(\cdot, \cdot)$ is the loss function. 
We assume $\ell(\cdot, \cdot)$ is $\sigma$-sub-Gaussian\footnote{A random variable (r.v.) bounded on $[a,b]$ is $\frac{b-a}{2}$-sub-Gaussian. 
If $(X_i)_i$ are independent $\sigma$-sub-Gaussian r.v.s, $\frac{1}{n}\sum^n_{i=1}X_i$ is $\frac{\sigma}{\sqrt{n}}$-sub-Gaussian. } in this paper, i.e., for all $\lambda\in\mathbb{R}$ and all $w\in\mathcal{W}$,
\begin{equation*}
    \mathbf{E}_{Z\sim P_Z} \big[
    \exp\big(  \lambda
    (\mathbf{E}[\ell(w,Z)] -\ell(w, Z) )
    \big)
    \big]\leq \exp\big( \lambda^2\sigma^2/2  \big). 
\end{equation*}
    
One of the central challenges in statistical learning theory is to guarantee that a learning algorithm $\mathcal{A}$ can ``generalize well'', i.e., the \emph{generalization error} defined as \ifshortver\vspace{-6pt}\fi
\begin{equation}
    \mathrm{gen}(W,S) := \mathbf{E}_{P_Z}\left[\ell(W,Z) \right] - \frac{1}{N}\sum^N_{i=1} \ell(W,Z_i)
    \label{eq::def_gen_error}\ifshortver\vspace{-6pt}\fi
\end{equation} 
is bounded; and hence how much $\mathcal{A}$ overfits is under control.

A vast array of studies have been proposed for this objective. 
To guarantee that the trained algorithm can generalize well, the theory of uniform convergence~\cite{vapnik2015uniform} states that the output has to be ``sufficiently simple''. 
Similar to this idea, bounds in terms of the VC-dimension and the Rademacher complexity have been discussed~\cite{vapnik1998statistical, boucheron2005theory}. 
However, they make no reference to the algorithm. 
To make a specific algorithm generalize, various types of methods have been studied, e.g., based on compression schemes~\cite{littlestone1986relating} or uniform stability~\cite{bousquet2002stability}. 
Motivated by controlling the bias in data analysis using the \emph{mutual information} between the algorithm input and output, it was shown by~\cite{russo2016controlling, xu2017information} that the expectation of the generalization error can be bounded by the mutual information $I(S; W)$ between the training dataset $S$ and the output hypothesis $W$: 
\begin{equation}
    \label{eq::gen_bd_expt_MI}
    \mathbf{E}\left[
    \left|\mathrm{gen}(W,S)\right|
    \right] 
    \leq \sqrt{\frac{2\sigma^2}{N} I(S; W)}.
\end{equation}

This provides the intuition that ``an algorithm that leaks little information about the dataset generalizes well''~\cite{banerjee2021information, bassily2018learners}, which partially coincides with the property of \emph{private} algorithms. 
There exist various notions of \emph{privacy}, e.g., total variation distance~\cite{rassouli2019optimal}, max-information~\cite{dwork2015generalization} and differential privacy~\cite{dwork2006calibrating, kotz2012laplace}. 
See Section~\ref{subsec::DP} for a detailed discussion.

Maximal leakage $\mathcal{L}(S \rightarrow W)$~\cite{issa2019operational} of an algorithm $P_{W|S}$, as an information leakage measure, has been discussed recently~\cite{liao2017hypothesis, saeidian2023pointwise, esposito2021generalization, liao2018tunable}. 
It has been shown by~\cite{esposito2021generalization} that the generalization error can be bounded by $\mathcal{L}(S \rightarrow W)$: given $\eta \in (0,1)$, 
\begin{equation}
    \mathbf{P}\left(\big|\mathrm{gen}(W,S)\big|\geq\eta \right) 
    \leq 2 \exp\left(
    \mathcal{L}(S\rightarrow W) - \frac{N \eta^2}{2\sigma^2}
     \right). 
     \label{eq::gen_bd_prob_ML}
\end{equation}

One major meaningful difference between~\eqref{eq::gen_bd_expt_MI} and~\eqref{eq::gen_bd_prob_ML} is that~\eqref{eq::gen_bd_prob_ML} is a \emph{high-probability} (a.k.a.\ \emph{single-draw}~\cite{catoni2007pac, hellstrom2020generalization}) bound; that is, it admits an \emph{exponentially decaying} (in the size of the dataset $N$) bound on the probability of having a large generalization error, which is generally untrue for mutual-information-based bounds that only guarantee a linearly decaying bound~\cite{bassily2018learners}. 
Moreover, \eqref{eq::gen_bd_prob_ML} depends on the samples through its support only and hence independent from the distribution, this can be useful in analysis of additive-noise settings~\cite{issa2023generalization}.

Generalization error bounds in terms of other information measures can be found in~\cite{steinke2020reasoning, sefidgaran2022rate, sefidgaran2024data, esposito2021generalization, hellstrom2020generalization, chu2023unified, wang2024generalization, liu2026tighter}. 
The connection between privacy measures and generalization has been discussed in~\cite{dwork2015generalization, rogers2016max, bassily2016algorithmic, esposito2021generalization, issa2023generalization}. 
Privacy is also tightly related to compression~\cite{chen2020breaking, liu2024universal}, and compressibility can be used to bound the generalization error as well~\cite{littlestone1986relating, sefidgaran2022rate, sefidgaran2024data}. 
In~\cite{rodriguez2021upper}, an approach based on \emph{typicality}~\cite{cover1999elements} has been used to bound the generalization error of private algorithms, and their bounds are in more explicit, easily computable forms, since mutual information can sometimes be hard to calculate.

The contributions of the current paper are two-fold. 
In the first part (Section~\ref{sec::MI}), we utilize the stability property of differentially private algorithms to further strictly improve the mutual information bounds in~\cite{rodriguez2021upper}. 
In the second part (Section~\ref{sec::ML}), we derive novel upper bounds on maximal leakage, which directly translate to upper bounds on mutual information~\cite{issa2019operational}. 
We hence provide explicit generalization bounds via~\eqref{eq::gen_bd_expt_MI} and~\eqref{eq::gen_bd_prob_ML}. 
\ifshortver
Due to the limit of space, most of the proofs can be found in the full version of this paper~\cite{liu2026typical_arxiv}. 
\fi

\subsection*{Notation}

We use calligraphic letters (e.g., $\mathcal{W}$), capital letters (e.g., $W$) and lower-case letters (e.g., $w$) to denote sets, random variables and instances, respectively. 
We assume all random variables are from finite alphabets and logarithm and entropy are to the base $e$ unless otherwise stated. 
For a statement $S$, $\mathds{1}_{\{S\}}$ is its indicator, i.e., $\mathds{1}_{\{S\}}$ is $1$ if $S$ holds, and $0$ otherwise. 
For distributions $P$ and $Q$, we use $P \ll Q$ to denote that $P$ is absolutely continuous with respect to $Q$.

\section{Preliminaries}
\label{sec::prelim}

We use $S := \left(Z_1,\ldots,Z_N \right)$ to denote a training dataset, which is a collection of $N$ instances $Z_i \in \mathcal{Z}$ i.i.d.\ from $P_Z$ (which is unknown). 
A stochastic learning algorithm $\mathcal{A}: \mathcal{Z}^N \rightarrow \mathcal{W}$ can be characterized by the conditional distribution $P_{W|S}$, i.e., $S$ is a random variable with distribution $P_S = P_Z^{\otimes N}$ over $\mathcal{Z}^N$, where $|\mathcal{Z}|$ is finite. 
We consider algorithms that are \emph{permutation-invariant}: $\mathcal{A}$ operates on a \textcolor{black}{(multi)set} instead of an ordered sequence; the hypothesis generated by $\mathcal{A}$ with dataset $S$ is \emph{equal in distribution} to the hypothesis generated by $\mathcal{A}$ with a \emph{permuted} dataset $\mathcal{P}(S)$, i.e., $\mathcal{A}(S) \stackrel{d}{=} \mathcal{A}(\mathcal{P}(S))$ for all permutations $\mathcal{P}$. 
This is a common assumption in the generalization analysis~\cite{bousquet2002stability, shalev2010learnability}.

In this paper, we upper-bound the generalization error~\eqref{eq::def_gen_error} via either an average bound of the form~\eqref{eq::gen_bd_expt_MI} or a high-probability bound of the form~\eqref{eq::gen_bd_prob_ML}. 
There exist other types of generalization bounds, e.g., PAC-Bayesian bounds~\cite{hellstrom2020generalization, guedj2021still, dziugaite2017computing}, which may also be of interest in the future. 
We consider $\sigma$-sub-Gaussian loss functions in this paper.

\subsection{Method of Types}

Consider $Z^N := (Z_1,\ldots,Z_N)$ where $Z_i \stackrel{\mathrm{i.i.d}}{\sim} P_Z$ for $i = 1, \ldots, N$. We can then define the \emph{type}~\cite{cover1999elements} of $Z^N$ as follows.

\begin{definition}
    The \emph{type} of a sequence $Z^N\in\mathcal{Z}^N$ is the relative proportion of occurrences of each symbol of $\mathcal{Z}$, i.e., $T_{Z^N}(t) = \mathsf{N}(t|Z^N)/N$ for all $t\in\mathcal{Z}$, where $ \mathsf{N}(t|Z^N)$ denotes the number of times symbol $t$ occurs in the sequence $Z^N$. The set of all possible types of sequences of elements from $\mathcal{Z}$ of length $N$ is denoted by $\mathcal{T}_{\mathcal{Z}, N}$. 
\end{definition}

The type and its variants were used to prove achievability results in network information theory problems, including both asymptotic~\cite{el2011network} and nonasymptotic~\cite{tan2013dispersions} settings. 
For the latter case, also see other methods~\cite{li2021unified, liu2025one, liu2025thesis, liu2025nonasymptotic}. 
Even though the number of items in the space scales exponentially in $N$, it is known that the total number of types only scales polynomially: 
\cite{cover1999elements} gives $|\mathcal{T}_{\mathcal{Z}, N}|\leq (N+1)^{|\mathcal{Z}|}$, which was improved by~\cite{rodriguez2021upper}: 
\begin{claim}[Claim 1 of~\cite{rodriguez2021upper}]
    \label{clm::rodriguez2021upper_typicality_bd}
    $|\mathcal{T}_{\mathcal{Z}, N}|\,\leq\,(N+1)^{|\mathcal{Z}| - 1}$, with equality if and only if $|\mathcal{Z}| = 2$. 
\end{claim}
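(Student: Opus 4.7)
The plan is to identify each type $T \in \mathcal{T}_{\mathcal{Z}, N}$ with its tuple of non-negative integer counts $(n_1, \ldots, n_{|\mathcal{Z}|})$, where $n_t := N \cdot T(t) = \mathsf{N}(t \,|\, Z^N)$ for any representative sequence $Z^N$ of type $T$. Types are in one-to-one correspondence with such tuples subject to the single linear constraint $\sum_{t \in \mathcal{Z}} n_t = N$. My key observation is that this lone constraint makes the last coordinate redundant: given any $(n_1, \ldots, n_{|\mathcal{Z}|-1})$, the value $n_{|\mathcal{Z}|} = N - \sum_{t < |\mathcal{Z}|} n_t$ is forced.

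First, I would establish the inequality by exhibiting the map $T \mapsto (n_1, \ldots, n_{|\mathcal{Z}|-1})$ from $\mathcal{T}_{\mathcal{Z}, N}$ into $\{0, 1, \ldots, N\}^{|\mathcal{Z}|-1}$. This map is injective, since any two types that agree on the first $|\mathcal{Z}|-1$ coordinates are forced by the sum constraint to agree on the last as well. This immediately yields $|\mathcal{T}_{\mathcal{Z}, N}| \leq (N+1)^{|\mathcal{Z}|-1}$.

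Next, to characterize when equality holds, I would analyze the image of this injection. For $|\mathcal{Z}| = 2$, every $n_1 \in \{0, \ldots, N\}$ lies in the image because $n_2 := N - n_1$ is automatically a valid count in $\{0, \ldots, N\}$; the map is a bijection and $|\mathcal{T}_{\mathcal{Z}, N}| = N + 1$, matching the bound. For $|\mathcal{Z}| \geq 3$ (with $N \geq 1$), I would exhibit the tuple $(N, N, 0, \ldots, 0) \in \{0, \ldots, N\}^{|\mathcal{Z}|-1}$ as a point outside the image: it would force $n_{|\mathcal{Z}|} = -N < 0$, which is not a valid count. This single witness in the complement is enough to prove strict inequality.

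The argument carries essentially no technical obstacle; the whole proof rests on the observation that the probability simplex defining types has exactly one degree of freedom less than its enclosing box $\{0, \ldots, N\}^{|\mathcal{Z}|}$. The only item requiring care is exhibiting an explicit excluded tuple in the $|\mathcal{Z}| \geq 3$ regime so that the equality characterization is tight in both directions rather than a one-sided bound.
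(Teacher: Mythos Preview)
Your argument is correct. The paper does not prove this claim itself; it cites \cite[Appendix~G.1]{rodriguez2021upper} and remarks that the original proof proceeds via the exact count $|\mathcal T_{\mathcal Z,N}|=\binom{N+|\mathcal Z|-1}{|\mathcal Z|-1}=\prod_{i=1}^{|\mathcal Z|-1}(1+N/i)$ followed by the crude bound $1+N/i\le 1+N$ for every factor. Your route is genuinely different and more elementary: instead of computing the exact cardinality and bounding it analytically, you exhibit an explicit injection of types into the box $\{0,\dots,N\}^{|\mathcal Z|-1}$ by dropping the last coordinate. This gives the inequality directly and makes the equality characterization transparent---surjectivity when $|\mathcal Z|=2$, and an explicit excluded tuple when $|\mathcal Z|\ge 3$. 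The product-bound approach delivers the equality case less directly (one must note that $1+N/i<1+N$ strictly for some $i\ge 2$). Your injection argument is cleaner for the equality part; on the other hand, the product representation is precisely what the present paper later refines via AM--GM in Claim~\ref{clm::typicality_bd}, which is why the authors recall the original method.
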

The property of the type that will be utilized in this paper is that it is equal to that of any permutation of the same sequence, i.e., $T_{Z^N} = T_{\mathcal{P}(Z^N)}$ for all permutations $\mathcal P$. 
Hence, $T_{Z^N}$ can \emph{uniquely} define a dataset $S=\{Z_1,\ldots,Z_N\}$. 
We then define the distance between two datasets $S$ and $S'$ via their types.

\begin{definition}[Definition 5 of \cite{rodriguez2021upper}]
    \label{def::distance}
    The \emph{distance} between two datasets  $S,S'$ is the minimum number of instances that needs to be changed in $S$ to obtain $S'$. 
    \begin{align*}
        \mathsf{d}(S,S') 
        & \triangleq \frac{1}{2} \sum_{a\in\mathcal{Z}} \big|
        \mathsf{N}(a|S) - \mathsf{N}(a|S')
        \big| \\
        & = \frac{N}{2} \big\Vert T_S - T_{S'} \big\Vert_1
    \end{align*}
    where $\mathsf{N}(a|S)$ denotes the number of times sample $a$ occurs in the dataset $S$.
\end{definition}
Due to the permutation invariance of the assumed algorithms, the use of types is justified by the bijection between types and datasets.

\subsection{Differential Privacy}
\label{subsec::DP}

Differential privacy~\cite{dwork2006calibrating, kotz2012laplace}, as one of the most celebrated privatization schemes, has been widely studied in the past years (also see its variants~\cite{mironov2017renyi, kairouz2015composition, dong2022gaussian}). 
It is defined as follows: 
\begin{definition}
    \label{def::DP}
    Given an algorithm $\mathcal{A}$ which induces the conditional distribution $P_{W|S}$, we say it satisfies $(\varepsilon, \delta)$-DP if for any neighboring (differ in a single data point) $(s,s')$ and $\mathcal{V}\subseteq \mathcal{W}$, it holds that for all measurable $W$,
    \begin{equation*}
        \mathbf{P}\big(W\in\mathcal{V} | S=s\big)
        \leq e^\varepsilon\,
        \mathbf{P}\big(W\in\mathcal{V} | S=s'\big) + \delta.
    \end{equation*}
    In particular, when $\delta = 0$ we simply refer to it as $\varepsilon$-DP.
\end{definition}

Differentially private algorithms are \emph{stable}~\cite{bassily2016algorithmic}, i.e., they produce similar outputs on similar input datasets.
For an $\varepsilon$-DP algorithm $\mathcal{A}$, the maximum log-likelihood ratio between any two neighboring input datasets is bounded by $\varepsilon$, and the notion of \emph{neighboring datasets} relates to the distance between datasets. 
We say that two datasets are neighbors if they have distance $1$ (in terms of Definition~\ref{def::distance}). 

One useful property of DP is \emph{group privacy}~\cite[Theorem~2.2]{dwork2014algorithmic}: for an $\varepsilon$-DP algorithm $\mathcal{A}$, if two (fixed) datasets $S$ and $S'$ are at distance $k$, then the KL divergence of their output distributions is bounded by
\begin{equation}
    D_{\mathrm{KL}}\big(\mathcal{A}(S)\,\Vert\,\mathcal{A}(S') \big)\leq k\varepsilon \tanh\left(k\varepsilon/2 \right)\leq k\varepsilon.
    \label{eq::group_DP}
\end{equation}

Maximal leakage, another privacy measure on which this paper focuses, will be discussed in Section~\ref{sec::ML}.

\section{Mutual Information Bounds}
\label{sec::MI}

In this section, we present novel generalization bounds that can be efficiently computed by bounding mutual information.

This idea has been explored by~\cite{rodriguez2021upper}, which provides generalization bounds of discrete, private algorithms by bounding the relative entropy $D_{\mathrm{KL}}(P_{W|S}\,\Vert\, Q_W)$, since by the \emph{golden formula}~\cite[Section 4.1]{polyanskiy2025information} we know 
\begin{equation}
    \label{eq::MI_bd_by_KL}
    I(S; W)\leq \mathbf{E}_{s\sim P_S}\left[D_{\mathrm{KL}} (P_{W|S=s} \,\Vert\, Q_W) \right],
\end{equation}
and the equality is achieved if and only if $Q_W = P_W$. 
The expression $D_{\mathrm{KL}} (P_{W|S} \,\Vert\, Q_W)$ also  appears in other generalization error characterizations, e.g., \cite{hellstrom2020generalization}. 
These upper bounds on relative entropy can be directly translate to generalization error bounds by~\eqref{eq::gen_bd_expt_MI}. 
We first review a result in \cite{rodriguez2021upper} by Claim~\ref{clm::rodriguez2021upper_typicality_bd}:

\begin{proposition}
    \label{prop::rodriguez_prop12}
    Let $S\in\mathcal{S}$ be a dataset of $N$ instances $Z^N\in\mathcal{Z}^N$ sampled i.i.d. from $P_Z$. 
    Let also $W\in\mathcal{W}$ be a hypothesis obtained with an algorithm $\mathcal{A}$, characterized by $P_{W|S}$. 
    For all $s\in\mathcal{S}$, there exists a distribution $Q_W$ such that
    \begin{itemize}
        \item If $\mathcal{A}$ \textcolor{black}{has a discrete output distribution}, then 
        \begin{equation}
            D_{\mathrm{KL}} (P_{W|S=s} \,\Vert\, Q_W) \leq \left(|\mathcal{Z}|-1\right)\log(N+1).
            \label{eq::rodri_MI_prop1}
        \end{equation}

        \item If $\mathcal{A}$ is also $\varepsilon$-DP and $\varepsilon\leq 1$, then 
        \begin{equation}
            D_{\mathrm{KL}} (P_{W|S=s} \,\Vert\, Q_W)\leq (|\mathcal{Z}|-1)\log(1 + e\varepsilon N). \label{eq::rodri_MI_prop2}
        \end{equation}
    \end{itemize}
\end{proposition}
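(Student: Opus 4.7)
The plan is to exploit the permutation invariance of $\mathcal{A}$ via the method of types: because $\mathcal{A}(S) \stackrel{d}{=} \mathcal{A}(\mathcal{P}(S))$ for every permutation $\mathcal{P}$, the conditional $P_{W|S=s}$ depends on $s$ only through its type $T_s$, so we may legitimately write $P_{W|S=s}(w) = P_{W|T=t_s}(w)$. The core strategy is then to choose $Q_W$ as a mixture of the type-conditioned distributions $\{P_{W|T=t}\}_{t\in\mathcal{T}_{\mathcal{Z},N}}$ and to control the pointwise ratio $P_{W|S=s}(w)/Q_W(w)$ uniformly in $w$, since any such uniform bound $P_{W|S=s}(w)/Q_W(w) \leq C$ immediately implies $D_{\mathrm{KL}}(P_{W|S=s}\,\Vert\, Q_W) \leq \log C$.

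For part (i), I take the uniform mixture $Q_W(w) := \frac{1}{|\mathcal{T}_{\mathcal{Z},N}|} \sum_{t\in\mathcal{T}_{\mathcal{Z},N}} P_{W|T=t}(w)$. Retaining only the term $t = t_s$ in the sum yields $Q_W(w) \geq P_{W|S=s}(w)/|\mathcal{T}_{\mathcal{Z},N}|$ for every $w$, whence $D_{\mathrm{KL}}(P_{W|S=s}\,\Vert\,Q_W) \leq \log|\mathcal{T}_{\mathcal{Z},N}|$, and \eqref{eq::rodri_MI_prop1} follows from the refined type-count bound of Claim~\ref{clm::rodriguez2021upper_typicality_bd}.

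For part (ii), the additional ingredient is the pointwise stability form of $\varepsilon$-DP: iterating Definition~\ref{def::DP} along a length-$\mathsf{d}(t,t_s)$ path in the type graph (group privacy) gives $P_{W|T=t}(w) \geq e^{-\mathsf{d}(t,t_s)\varepsilon} P_{W|T=t_s}(w)$ for every pair of types. Using the same uniform mixture $Q_W$ as in (i), every term in the mixture---not just $t = t_s$---now contributes, yielding
\[
Q_W(w) \;\geq\; \frac{P_{W|S=s}(w)}{|\mathcal{T}_{\mathcal{Z},N}|} \sum_{t\in\mathcal{T}_{\mathcal{Z},N}} e^{-\mathsf{d}(t,t_s)\varepsilon},
\]
so $D_{\mathrm{KL}}(P_{W|S=s}\,\Vert\,Q_W) \leq \log|\mathcal{T}_{\mathcal{Z},N}| - \log \sum_t e^{-\mathsf{d}(t,t_s)\varepsilon}$. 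An equivalent, more geometric route is to choose $Q_W$ as the uniform mixture over a radius-$k$ cover $\mathcal{C}\subseteq\mathcal{T}_{\mathcal{Z},N}$ of the type simplex: for any $t_s$ pick $c\in\mathcal{C}$ with $\mathsf{d}(t_s,c)\leq k$, whence $Q_W(w) \geq e^{-k\varepsilon} P_{W|S=s}(w)/|\mathcal{C}|$ and $D_{\mathrm{KL}} \leq k\varepsilon + \log|\mathcal{C}|$; taking $k \asymp 1/\varepsilon$ against a cover of size $|\mathcal{C}| \lesssim (1+\varepsilon N)^{|\mathcal{Z}|-1}$ reproduces \eqref{eq::rodri_MI_prop2}.

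The main obstacle is the combinatorial step in (ii): one must lower bound $\sum_{t} e^{-\mathsf{d}(t,t_s)\varepsilon}$ uniformly in $t_s$, or equivalently upper bound the size of a $k$-cover of $\mathcal{T}_{\mathcal{Z},N}$. Since the number of types at distance $r$ from $t_s$ depends on where $t_s$ sits in the simplex---and is smallest when $t_s$ is concentrated on a single symbol---a boundary-aware count is needed, which is what makes the worst-case analysis delicate. The hypothesis $\varepsilon \leq 1$ enters precisely here: it ensures that the additive term $k\varepsilon$ (with $k \asymp 1/\varepsilon$) is a bounded constant that can be absorbed into the logarithm, producing the factor $e$ that appears in $\log(1+e\varepsilon N)$.
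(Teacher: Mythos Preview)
Your proposal is correct, and your cover-based route for part~(ii) is exactly the representative-set construction the paper describes (and that the original reference \cite{rodriguez2021upper} uses): partition the count hypercube $[0,N]^{|\mathcal Z|-1}$ into $t^{|\mathcal Z|-1}$ sub-cubes, take one center from each as $\mathcal C$, apply Lemma~\ref{lem::rodriguez2021upper_KL_bd} with the uniform mixture over $\mathcal C$ together with group privacy to obtain $D_{\mathrm{KL}}\le (|\mathcal Z|-1)\varepsilon N/t + (|\mathcal Z|-1)\log t$, and optimize at $t=\lceil\varepsilon N\rceil$ (the hypothesis $\varepsilon\le 1$ is exactly what guarantees this $t$ lies in $\{1,\dots,N\}$). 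The paper itself does not prove Proposition~\ref{prop::rodriguez_prop12}---it is quoted from \cite{rodriguez2021upper}---but the argument is contained verbatim in the paper's proof of Theorem~\ref{thm::improved_KL_bd}, which is this same construction with the additional overlap term $-\log(1+e^{-\varepsilon N})$ coming from Lemma~\ref{lem::dp_overlap_mixture}; your route~(a) for part~(ii) (keeping all mixture components rather than just the nearest) is in fact precisely that lemma's idea.
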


Note that in the regime $\epsilon > 1$, \eqref{eq::rodri_MI_prop1} is tighter than \eqref{eq::rodri_MI_prop2}.

Combining with the average generalization error bound~\eqref{eq::gen_bd_expt_MI} and the golden formula~\eqref{eq::MI_bd_by_KL}, from~\eqref{eq::rodri_MI_prop1} one can prove 
\[\mathbf{E}\left[
    \left|\mathrm{gen}(W,S)\right|
    \right] \leq \sqrt{\frac{2\sigma^2}{N} \left(
    |\mathcal{Z}|-1
    \right)\log(N+1)}
\]
and a sharper bound 
\[
\mathbf{E}\left[
    \left|\mathrm{gen}(W,S)\right|
    \right] \leq \sqrt{\frac{2\sigma^2}{N} 
    (|\mathcal{Z}|-1)\log(1 + e\varepsilon N). 
    }
\]
if $\mathcal{A}$ is also $\varepsilon$-DP with $\varepsilon \leq 1$.

\smallskip
In this paper, we first provide a slightly improved version of Claim~\ref{clm::rodriguez2021upper_typicality_bd}, since the proof of Claim~\ref{clm::rodriguez2021upper_typicality_bd} (see~\cite[Appendix~G.1]{rodriguez2021upper}) \textcolor{black}{employed the coarse relaxation} $1 + N/i\leq 1 + N$ for every $i$.

\begin{claim}
    \label{clm::typicality_bd}
    $|\mathcal{T}_{\mathcal{Z}, N}|\,\leq\,\frac{1}{\sqrt{2\pi (|\mathcal{Z}|-1)}}\left(\frac{e}{|\mathcal{Z}|-1}\left(N+\frac{|\mathcal{Z}|}{2}\right)\right)^{|\mathcal{Z}|-1}$. 
\end{claim}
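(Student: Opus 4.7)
The plan is to bound the number of types by a direct combinatorial counting followed by an AM--GM and Stirling estimate, avoiding the loose step $1+N/i \le 1+N$ identified in the paper.

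First I would observe the standard stars-and-bars identity
\[
|\mathcal{T}_{\mathcal{Z},N}| \;=\; \binom{N + |\mathcal{Z}| - 1}{|\mathcal{Z}| - 1},
\]
since a type is uniquely determined by the nonnegative integer solution of $n_1 + \cdots + n_{|\mathcal{Z}|} = N$. Writing $k \eqdef |\mathcal{Z}| - 1$, I can expand
\[
\binom{N+k}{k} \;=\; \frac{(N+1)(N+2)\cdots(N+k)}{k!}.
\]

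Next I would handle the numerator and denominator separately. For the numerator, the arithmetic mean of the $k$ terms $N+1,\ldots,N+k$ is exactly $N + (k+1)/2 = N + |\mathcal{Z}|/2$, so the AM--GM inequality gives
\[
(N+1)(N+2)\cdots(N+k) \;\le\; \left(N + \tfrac{|\mathcal{Z}|}{2}\right)^{k}.
\]
For the denominator, I would apply Stirling's lower bound $k! \ge \sqrt{2\pi k}\,(k/e)^{k}$, valid for every $k \ge 1$ (the case $|\mathcal{Z}|=1$, i.e.\ $k=0$, is trivial since $|\mathcal{T}_{\mathcal{Z},N}|=1$ while the stated bound is infinite). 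Dividing yields
\[
\binom{N+k}{k} \;\le\; \frac{1}{\sqrt{2\pi k}} \left(\frac{e\bigl(N + |\mathcal{Z}|/2\bigr)}{k}\right)^{\!k},
\]
which is exactly the claim after reinstating $k = |\mathcal{Z}| - 1$.

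I do not expect any substantial obstacle: the only nontrivial observation is to recognize that the arithmetic mean of $N+1,\ldots,N+(|\mathcal{Z}|-1)$ is precisely $N+|\mathcal{Z}|/2$, which produces the $N + |\mathcal{Z}|/2$ inside the parentheses of the target expression and is what forces the AM--GM step (rather than the cruder per-factor bound $N+i \le N+k$ used in~\cite{rodriguez2021upper}). The Stirling factor $1/\sqrt{2\pi(|\mathcal{Z}|-1)}$ emerges automatically from the lower bound on $k!$, matching the prefactor in the claim exactly.
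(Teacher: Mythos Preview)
Your proposal is correct and essentially identical to the paper's proof: both expand $\binom{N+k}{k}=\prod_{i=1}^k(N+i)/k!$ with $k=|\mathcal Z|-1$, apply AM--GM to bound the numerator by $(N+(k+1)/2)^k=(N+|\mathcal Z|/2)^k$, and then invoke Stirling's lower bound $k!\ge\sqrt{2\pi k}\,(k/e)^k$. The only cosmetic difference is that you explicitly mention the stars-and-bars identity and the trivial $k=0$ edge case.
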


\ifshortver
The proof of Claim~\ref{clm::typicality_bd} can be found in the full version~\cite{liu2026typical_arxiv}. 
\else 
The proof of Claim~\ref{clm::typicality_bd} can be found in Appendix~\ref{app::typicality_bd}. 
\fi

\medskip

Claim~\ref{clm::typicality_bd} improves Claim~\ref{clm::rodriguez2021upper_typicality_bd} by a multiplicative constant factor $<1$. 
Take $k = |\mathcal{Z}|-1$, this constant factor $(k+1)/2^k$ is actually an \emph{optimal} factor $c$ such that $\binom{N+k}{k}\leq c(N+1)^k$ holds for all $N\geq 1$, 
\textcolor{black}{it leads to a strictly tighter bound as $|\mathcal{Z}|$ grows}. 
However, we note that the advantages of our theorems do \emph{not} mostly come from here, we are just using Claim~\ref{clm::typicality_bd} \textcolor{black}{to sharpen the resulting bounds.}

\begin{lemma}[Lemma 2 of~\cite{rodriguez2021upper}]
    \label{lem::rodriguez2021upper_KL_bd}
    Let $P$ and $Q$ be two probability distributions such that $P \ll Q$.
    Let $Q$ be a finite mixture of probability distributions such that $Q = \sum_b \omega_b Q_b$, where $\sum_b \omega_b = 1$, and $P \ll Q_b$ for all $b$.
    Then we have 
    \ifshortver

    \else
    \fi
    \begin{align}
    	D_{\mathrm{KL}}(P\Vert Q) &\leq - \log \left( \sum\nolimits_b \omega_b \exp \big( {-} D_{\mathrm{KL}}(P\Vert Q_b) \big) \right) 
    	\label{eq:rodriguez_lemma_ub_1} \\
    	&\leq \min_b \big\lbrace D_{\mathrm{KL}}(P\Vert Q_b) - \log(\omega_b) \big\rbrace .
    	\label{eq:rodriguez_lemma_ub_2} 
    \end{align}
\end{lemma}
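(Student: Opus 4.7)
The plan is to prove the two displayed inequalities in turn, starting with~\eqref{eq:rodriguez_lemma_ub_2}, which is essentially a term-dropping observation, and then focusing on~\eqref{eq:rodriguez_lemma_ub_1}, which contains the actual content.

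For~\eqref{eq:rodriguez_lemma_ub_2}, I would simply note that every summand $\omega_b \exp(-D_{\mathrm{KL}}(P\Vert Q_b))$ is nonnegative, so for each fixed index $b$ we have
$\sum_{b'} \omega_{b'} \exp(-D_{\mathrm{KL}}(P\Vert Q_{b'})) \geq \omega_b \exp(-D_{\mathrm{KL}}(P\Vert Q_b))$. Applying $-\log(\cdot)$ reverses the inequality to give $-\log(\sum_{b'} \omega_{b'} \exp(-D_{\mathrm{KL}}(P\Vert Q_{b'}))) \leq D_{\mathrm{KL}}(P\Vert Q_b) - \log(\omega_b)$, and minimizing over $b$ delivers the claim.

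The substantive step is~\eqref{eq:rodriguez_lemma_ub_1}. A naive application of Jensen directly to $D_{\mathrm{KL}}(P\Vert Q) = -\mathbf{E}_P[\log(\sum_b \omega_b q_b/p)]$ only yields the weaker bound $\sum_b \omega_b D_{\mathrm{KL}}(P\Vert Q_b)$, so a more careful identity is needed. The key idea is to work with the difference between the two KL divergences, which (denoting by $p, q, q_b$ the densities of $P, Q, Q_b$ with respect to a common dominating measure, all of which exist by the assumed absolute continuities) gives
\begin{equation*}
D_{\mathrm{KL}}(P\Vert Q) - D_{\mathrm{KL}}(P\Vert Q_b) = \mathbf{E}_P\!\left[\log\frac{q_b}{q}\right].
\end{equation*}
Exponentiating both sides and applying Jensen's inequality in the correct direction (since $\exp$ is convex, $\exp(\mathbf{E}_P[X]) \leq \mathbf{E}_P[\exp(X)]$) produces
\begin{equation*}
\exp\!\big({-}D_{\mathrm{KL}}(P\Vert Q_b)\big) \leq \exp\!\big({-}D_{\mathrm{KL}}(P\Vert Q)\big)\cdot \mathbf{E}_P\!\left[\frac{q_b}{q}\right].
\end{equation*}
Multiplying by $\omega_b$, summing over $b$, and using $\sum_b \omega_b q_b = q$ collapses the right-hand side since $\mathbf{E}_P[\sum_b \omega_b q_b/q] = \mathbf{E}_P[1] = 1$. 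The resulting inequality $\sum_b \omega_b \exp(-D_{\mathrm{KL}}(P\Vert Q_b)) \leq \exp(-D_{\mathrm{KL}}(P\Vert Q))$, after taking $-\log$ on both sides (monotone decreasing), yields~\eqref{eq:rodriguez_lemma_ub_1}.

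The main obstacle is choosing the right quantity to exponentiate: everything collapses only because Jensen is applied to the \emph{difference} $D_{\mathrm{KL}}(P\Vert Q) - D_{\mathrm{KL}}(P\Vert Q_b)$, so that the mixture identity $\sum_b \omega_b q_b = q$ appears inside the expectation and normalizes away cleanly. Once this identity is identified, the remainder is routine, modulo a quick check that all expectations are well-defined under the stated absolute-continuity assumptions.
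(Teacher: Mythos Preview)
The paper does not supply its own proof of this lemma: it is quoted verbatim as ``Lemma~2 of~\cite{rodriguez2021upper}'' and used as a black box, so there is no in-paper argument to compare your proposal against.

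That said, your argument is correct and is the standard route to this inequality. The derivation of~\eqref{eq:rodriguez_lemma_ub_2} from~\eqref{eq:rodriguez_lemma_ub_1} by retaining a single summand is immediate. For~\eqref{eq:rodriguez_lemma_ub_1}, the identity
\[
D_{\mathrm{KL}}(P\Vert Q)-D_{\mathrm{KL}}(P\Vert Q_b)=\mathbf{E}_P\!\left[\log\frac{q_b}{q}\right],
\]
followed by Jensen's inequality for $\exp$ and the collapse $\sum_b \omega_b q_b=q$, is exactly the mechanism behind this bound (it is the same computation that underlies the evidence lower bound in variational inference). The only point worth tightening in a final write-up is the measurability/integrability check: under $P\ll Q_b$ for all $b$ and $P\ll Q$, the ratio $q_b/q$ is well-defined $P$-a.e.\ and $\mathbf{E}_P[q_b/q]\le \omega_b^{-1}<\infty$, so all expectations you invoke are finite and Jensen applies without caveat.
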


Inequalities in above lemma can already be tight and cannot be uniformly improved. 
However, for the application of differentially private algorithms, we can sharpen it as follows.

\begin{lemma}
    \label{lem::dp_overlap_mixture}
    Let $Q$ be a mixture of $M$ probability distributions $(M<\infty)$ such that \ifshortver\vspace{-5pt}\fi
    \[
    Q = \sum_{b=1}^M \omega_b Q_b,
    \qquad \omega_b>0,\ \ \sum_{b=1}^M \omega_b = 1. 
    \]
    Fix an index $i\in\{1,\dots,M\}$. 
    For each $b$, let $\alpha_{b,i}\in [0,1]$ be the largest constant such that for every measurable set $E$,
    \begin{equation*}
        Q_b(E)\geq \alpha_{b,i}\, Q_i(E).
    \end{equation*}
    Then for any $P \ll Q_i$,  \ifshortver\vspace{-5pt}\fi
    \begin{equation*}
        D_{\mathrm{KL}}(P\Vert Q) \leq D_{\mathrm{KL}}(P\Vert Q_i) - \log\left(
        \sum_{b=1}^M \omega_b \alpha_{b,i}
        \right).
    \end{equation*}
\end{lemma}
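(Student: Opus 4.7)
The plan is to prove the bound via a pointwise (density-ratio) comparison between $Q$ and $Q_i$, and then use the chain-rule decomposition of the log-density ratio inside the KL divergence. The proof really only needs the fact that a pointwise lower bound on $dQ/dQ_i$ translates, via Jensen's inequality applied logarithmically, into an additive correction term in the KL divergence.

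First I would translate the hypothesis into a pointwise inequality. The condition ``$Q_b(E)\geq \alpha_{b,i}Q_i(E)$ for every measurable $E$'' is equivalent (on the support of $Q_i$) to $dQ_b/dQ_i \ge \alpha_{b,i}$ $Q_i$-almost everywhere. In the discrete setting of the paper this is immediate: for each atom $w$ in the support of $Q_i$, $Q_b(\{w\})\geq \alpha_{b,i}Q_i(\{w\})$. Summing over the mixture,
\begin{equation*}
\frac{dQ}{dQ_i}(w) \;=\; \sum_{b=1}^M \omega_b \frac{dQ_b}{dQ_i}(w) \;\ge\; \sum_{b=1}^M \omega_b \alpha_{b,i}
\end{equation*}
$Q_i$-almost everywhere, and hence also $P$-almost everywhere since $P\ll Q_i$. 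In particular $P \ll Q$, so the left-hand KL is well-defined.

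Next I would insert this into the KL. Using the chain rule of Radon--Nikodym derivatives,
\begin{equation*}
\log \frac{dP}{dQ} \;=\; \log \frac{dP}{dQ_i} \,-\, \log \frac{dQ}{dQ_i} \;\le\; \log \frac{dP}{dQ_i} \,-\, \log\!\left( \sum_{b=1}^M \omega_b \alpha_{b,i} \right),
\end{equation*}
where the last step uses the pointwise lower bound derived above (and monotonicity of $\log$). Taking expectation with respect to $P$ gives
\begin{equation*}
D_{\mathrm{KL}}(P\Vert Q) \;\le\; D_{\mathrm{KL}}(P\Vert Q_i) \,-\, \log\!\left(\sum_{b=1}^M \omega_b \alpha_{b,i}\right),
\end{equation*}
which is exactly the claimed inequality.

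The argument is short; the only delicate point is ensuring the density-ratio manipulation is valid. In the general (non-discrete) case one should briefly invoke the Lebesgue decomposition of each $Q_b$ with respect to $Q_i$ to rigorously conclude $dQ_b/dQ_i \ge \alpha_{b,i}$ $Q_i$-a.e. from the set-wise inequality, but the paper's setting is discrete, so this is the main (minor) obstacle. Conceptually, the lemma tightens Lemma~\ref{lem::rodriguez2021upper_KL_bd} in the specific situation where the components of the mixture overlap substantially: instead of paying only $-\log\omega_i$ for selecting component $i$ (as in \eqref{eq:rodriguez_lemma_ub_2}), one gets credit for every other component $Q_b$ that dominates a portion of $Q_i$, via the factor $\sum_b \omega_b \alpha_{b,i} \geq \omega_i$, which is exactly what will be leveraged later for $\varepsilon$-DP algorithms via group privacy~\eqref{eq::group_DP}.
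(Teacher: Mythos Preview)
Your proof is correct and follows essentially the same approach as the paper's: both establish that $dQ/dQ_i \ge \sum_b \omega_b \alpha_{b,i}$ $Q_i$-a.e.\ from the set-wise hypothesis, and then plug this into the chain-rule decomposition $\log\frac{dP}{dQ}=\log\frac{dP}{dQ_i}-\log\frac{dQ}{dQ_i}$ before taking expectation. The only cosmetic difference is that the paper first sums the set-wise inequalities to obtain $Q(E)\ge c_i Q_i(E)$ and then passes to densities, whereas you pass to densities componentwise and then sum; the arguments are otherwise identical.
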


\ifshortver
The proof of Lemma~\ref{lem::dp_overlap_mixture} can be found in the full version~\cite{liu2026typical_arxiv}. 
\else 
The proof of Lemma~\ref{lem::dp_overlap_mixture} can be found in Appendix~\ref{app::proof_dp_overlap_mixture}. 
\fi

\ifshortver \else
{
\color{black}

\begin{remark}
    The constant $\alpha_{b,i}$ in Lemma \ref{lem::dp_overlap_mixture} admits the following explicit characterization,
    \begin{align*}
        \alpha_{b,i} = \begin{cases}
            \left(\esssup_{Q_b} \frac{dQ_i}{dQ_b}\right)^{-1}, & \text{if } Q_i \ll Q_b \text{ and } \esssup_{Q_b} \frac{dQ_i}{dQ_b} < \infty \\
            0, & \text{otherwise}.
        \end{cases}
    \end{align*}
    Equivalently, whenever $Q_i \ll Q_b$, $\alpha_{b,i} = \exp\left(-D_\infty(Q_i \| Q_b)\right)$, with the convention that $\alpha_{b,i} = 0$ if $D_\infty(Q_i \| Q_b) = \infty$.
\end{remark}
}
\fi

\begin{remark}
    \label{rem::dP_Zpecialization}
    If $Q_b = P_{W|S=s_b}$ are output distributions of an $\varepsilon$-DP algorithm and $\mathsf{d}(\cdot,\cdot)$ is the dataset distance,
    then group privacy implies that for all measurable $E$,
    \begin{align*}
         Q_b(E) & = P_{W|S=s_b}(E) \\
         & \geq e^{-\varepsilon  \mathsf{d}(s_b,s_i)}\,P_{W|S=s_i}(E) \\
         & = e^{-\varepsilon  \mathsf{d}(s_b,s_i)}\,Q_i(E).
    \end{align*}
    Hence Lemma~\ref{lem::dp_overlap_mixture} applies with $\alpha_{b,i}=e^{-\varepsilon  \mathsf{d}(s_b,s_i)}$ for any $b$. 
\end{remark}

\begin{remark}
    The bound by Lemma~\ref{lem::dp_overlap_mixture} is always at least as tight as the one in~\eqref{eq:rodriguez_lemma_ub_2} since $\sum^M_{b=1} \omega_b \alpha_{b,i} \geq \omega_i \alpha_{i,i} =  \omega_i$, and is strictly tighter whenever $\sum_{b\neq i} \omega_b \alpha_{b,i} > 0$. 
\end{remark}

We now consider differentially private algorithms and show how to further improve the results.

We first define the concept of \emph{representative sets}, introduced in~\cite[Appendix C]{rodriguez2021upper}, which will also be used in our proof. 
The idea is to utilize the property that a private algorithm produces statistically similar outputs for neighboring datasets.

Note that all types $T_s$ lie inside the unit hypercube $[0,1]^{|\mathcal{Z}|-1}$ as the type of a sequence can be uniquely identified by the first $(|\mathcal{Z}|-1)$ dimensions. 
Rather than discussing types, we study the vector of counts $\mathsf{N}_s$ by $\mathsf{N}_s(a) = \mathsf{N}(a|s)$ for all $a\in\mathcal{Z}$, and hence $\mathsf{N}_s$ lies in a $[0,N]^{|\mathcal{Z}|-1}$ hypercube. 
For the interval $[0,N]$, each dimension can be split into $t$ parts, where $t\in[1:N]$, resulting in a $[0,N]^{|\mathcal{Z}|-1}$ hypercube covering of $t^{|\mathcal{Z}|-1}$ smaller hypercubes, and the side of each small hypercube has length $l = N/t$ and $l' = \lfloor l\rfloor+1$ atoms.

We obtain the \emph{representative set} by taking one \emph{center atom} (one dataset) from each small hypercube in $t^{|\mathcal{Z}|-1}$ cover, by:
\begin{itemize}
    \item If $l'$ is odd, choose the central atom as the corresponding coordinate for the center of the small hypercube. 

    \item  If $l'$ is even, enlarge the small hypercube in one unit and choose the center of this enlarged hypercube.
\end{itemize}

\begin{theorem}
    \label{thm::improved_KL_bd}
    Let $\mathcal A$ be an $\varepsilon$-DP algorithm under the dataset distance $\mathsf{d} (\cdot,\cdot)$, with an induced distribution $P_{W|S}$ where $W\in \mathcal{W}$ is a hypothesis obtained by an algorithm $\mathcal{A}$. 
    For all datasets $s\in\mathcal{S}$, there exists a distribution $Q_W$ on $\mathcal{W}$ such that
    \begin{align}
    & D_{\mathrm{KL}} \bigl(P_{W|S=s}\,\|\,Q_W\bigr)
    \leq \min_{t\in \{1,2,\dots,N\}}
    \Big\{
    (|\mathcal Z|-1)\frac{\varepsilon N}{t} +
    \nonumber\\
    & \qquad\qquad\quad
     (|\mathcal Z|-1)\log t
    - \mathds{1}_{\{t\geq 2\}}\log \bigl(1+e^{-\varepsilon N} \bigr) \Big\}.
    \label{eq:prop2_overlap_not}
    \end{align}
\end{theorem}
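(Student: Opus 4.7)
The plan is to follow the mixture-distribution strategy behind Proposition~\ref{prop::rodriguez_prop12}, but with two upgrades: replace Lemma~\ref{lem::rodriguez2021upper_KL_bd} with the sharper Lemma~\ref{lem::dp_overlap_mixture}, and feed group privacy into its overlap coefficients via Remark~\ref{rem::dP_Zpecialization}. Concretely, for each $t\in\{1,\ldots,N\}$ I build the representative set $\{s_b\}_{b\in\mathcal{B}}$ exactly as described just before the theorem: partition the count hypercube $[0,N]^{|\mathcal{Z}|-1}$ into $|\mathcal{B}|\leq t^{|\mathcal{Z}|-1}$ smaller hypercubes of side $N/t$, keep one center atom per cell, and form the uniform mixture
\[Q_W\;:=\;\frac{1}{|\mathcal{B}|}\sum_{b\in\mathcal{B}} P_{W|S=s_b},\]
so that the mixture weights are $\omega_b=1/|\mathcal{B}|$. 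The final bound will be obtained by optimizing $t$ at the end.

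For a fixed $s\in\mathcal{S}$, let $i\in\mathcal{B}$ index the cell that contains $\mathsf{N}_s$. By construction, $\mathsf{N}_s$ and $\mathsf{N}_{s_i}$ lie in a common hypercube of side $N/t$, hence they differ by at most $N/t$ in each of the first $|\mathcal{Z}|-1$ coordinates, and the constraint $\sum_a \mathsf{N}_s(a)=\sum_a \mathsf{N}_{s_i}(a)=N$ forces the same bound on the last coordinate; this yields $\mathsf{d}(s,s_i)\leq (|\mathcal{Z}|-1) N/t$. Plugging this into the group-privacy inequality~\eqref{eq::group_DP} gives
\[D_{\mathrm{KL}}\bigl(P_{W|S=s}\,\Vert\,P_{W|S=s_i}\bigr)\;\leq\;\varepsilon\,\mathsf{d}(s,s_i)\;\leq\;(|\mathcal{Z}|-1)\frac{\varepsilon N}{t},\]
which will contribute the first term of~\eqref{eq:prop2_overlap_not}.

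For the remaining terms I apply Lemma~\ref{lem::dp_overlap_mixture} with the index $i$ chosen above and the overlap constants from Remark~\ref{rem::dP_Zpecialization}, namely $\alpha_{b,i}=e^{-\varepsilon\mathsf{d}(s_b,s_i)}$. Since any two datasets of size $N$ are at distance at most $N$, one has the universal lower bound $\alpha_{b,i}\geq e^{-\varepsilon N}$ while $\alpha_{i,i}=1$, so
\[\sum_{b\in\mathcal{B}}\omega_b\,\alpha_{b,i}\;\geq\;\frac{1}{|\mathcal{B}|}\bigl(1+(|\mathcal{B}|-1)e^{-\varepsilon N}\bigr).\]
When $t\geq 2$ we have $|\mathcal{B}|\geq 2$, and the right-hand side is at least $\frac{1}{|\mathcal{B}|}(1+e^{-\varepsilon N})$; combined with $|\mathcal{B}|\leq t^{|\mathcal{Z}|-1}$ this yields $-\log\sum_b\omega_b\alpha_{b,i}\leq(|\mathcal{Z}|-1)\log t-\log(1+e^{-\varepsilon N})$. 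When $t=1$ the mixture has a single component and this correction vanishes, which is exactly the origin of the indicator $\mathds{1}_{\{t\geq 2\}}$. Substituting these two estimates into Lemma~\ref{lem::dp_overlap_mixture} and then minimizing over $t\in\{1,\ldots,N\}$ completes the argument.

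The main obstacle I expect is the bookkeeping around the representative-set construction: verifying simultaneously that the cover produces at most $t^{|\mathcal{Z}|-1}$ cells and that each cell has diameter at most $N/t$ in every coordinate, while also covering the boundary cases handled by the odd/even enlargement rule. Once those combinatorial facts are in hand, the group-privacy bound and the $-\log(1+e^{-\varepsilon N})$ saving from Lemma~\ref{lem::dp_overlap_mixture} are almost mechanical; the improvement over Proposition~\ref{prop::rodriguez_prop12} comes entirely from the latter, which exploits the fact that \emph{every} pair of representative output distributions overlaps by at least $e^{-\varepsilon N}$, not only the one matched to the current $s$.
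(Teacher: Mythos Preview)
Your proposal is correct and follows essentially the same route as the paper's proof: build the representative set with parameter $t$, form the uniform mixture, apply Lemma~\ref{lem::dp_overlap_mixture} with $\alpha_{b,i}=e^{-\varepsilon\mathsf d(s_b,s_i)}$, bound the covering distance by $(|\mathcal Z|-1)N/t$ via group privacy, and for $t\ge 2$ keep only the $b=i$ and one $b\neq i$ terms using the trivial diameter bound $\mathsf d(s_b,s_i)\le N$. One small slip: the constraint $\sum_a\mathsf N_s(a)=N$ does \emph{not} force the last coordinate to differ by at most $N/t$ (it can differ by up to $(|\mathcal Z|-1)N/t$), but since $\mathsf d(s,s_i)=\tfrac12\|\mathsf N_s-\mathsf N_{s_i}\|_1$ the correct bound $(|\mathcal Z|-1)N/t$ still follows, so your conclusion stands.
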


{
\color{black}

\begin{proof}[Proof sketch]
Let $m:=|\mathcal Z|$, construct representative set $\mathcal{S}_0= \allowbreak \{s_1,\dots,s_M\}$ by partitioning the space $[0,N]^{m-1}$ into $t^{m-1}$ subcubes of side length $\frac{N}{t}$. 
Take $1$ dataset from each subcube, we know $M\le t^{m-1}$ and define $Q_W^{(t)}:=\frac1M\sum_{b=1}^M P_{W|S=s_b}$. 

Fix any dataset $s$. For $S_0$, there exists an index $i$ such that 
\[
\mathsf d(s,s_i)\le (m-1) N/t.
\]

For every $b$ and measurable $E$, group privacy of $\varepsilon$-DP gives, 
\[
P_{W|S=s_b}(E)\ge e^{-\varepsilon \mathsf d(s_b,s_i)} P_{W|S=s_i}(E).
\]
Applying Lemma~\ref{lem::dp_overlap_mixture}, we upper bound $D_{\mathrm{KL}}(P_{W|S=s}\|Q_W^{(t)})$ by 
\[
D_{\mathrm{KL}}(P_{W|S=s}\|P_{W|S=s_i})
-\log\Big(\frac1M\sum_{b=1}^M e^{-\varepsilon \mathsf d(s_b,s_i)}\Big),
\]
and the first term is bounded by $(m-1)\varepsilon N/t$ by group privacy, and the second is bounded by $\sum_{b=1}^M e^{-\varepsilon \mathsf d(s_b,s_i)} \ge 1 + e^{-\varepsilon N}$. Combining these yields, for every $s$, $D_{\mathrm{KL}}(P_{W|S=s}\|Q_W^{(t)})
\le$
\[
(m-1){\varepsilon N}/{t}
+(m-1)\log t
-\mathds 1_{\{t\ge2\}}\log(1+e^{-\varepsilon N}).
\]
\end{proof}}

\ifshortver
The complete proof of Theorem~\ref{thm::improved_KL_bd} can be found in~\cite{liu2026typical_arxiv}. 
\else 
The proof of Theorem~\ref{thm::improved_KL_bd} can be found in Appendix~\ref{app::improved_KL_bd}. 
\fi
It is obvious that Theorem~\ref{thm::improved_KL_bd} \emph{strictly} improves~\cite[Proposition 2]{rodriguez2021upper}. 

In~\cite[Proposition~3]{rodriguez2021upper}, a different idea of counting only the necessary hypercubes was employed to further sharpen the bounds. 
Here, we adopt the similar idea and combine it with our technique to further strengthen our bounds.

\ifshortver
\begin{theorem} 
\label{thm::prop3_plus_overlap}
Let $\mathcal A$ be an $\varepsilon$-DP algorithm under the distance $\mathsf d(\cdot,\cdot)$.
Fix $|\mathcal Z|\geq 2$ and $N\geq 1$.
For every dataset $s\in\mathcal{S}$, there exists a distribution $Q_W^{(t)}$ such that $D_{\mathrm{KL}} \bigl(P_{W|S=s}\,\|\,Q_W^{(t)}\bigr)\leq $
\begin{align}
& \min_{t\in \{1,2,\dots,N\}}
\Big\{
(|\mathcal Z|-1)\frac{\varepsilon N}{t}
+
(|\mathcal Z|-1)\log \left(t+\frac{|\mathcal Z|-2}{2}\right) \nonumber \\
&\qquad
-\log ((|\mathcal Z|-1)!)
-\mathds 1_{\{t\geq 2\}}\log \bigl(1+e^{-\varepsilon\Delta_t}\bigr)\Big\}.
\label{eq:prop3_overlap_master_no_m} 
\end{align}
\end{theorem}
\else
\begin{theorem} 
\label{thm::prop3_plus_overlap}
Let $\mathcal A$ be an $\varepsilon$-DP algorithm under the distance $\mathsf d(\cdot,\cdot)$.
Fix $|\mathcal Z|\geq 2$ and $N\geq 1$.
For every dataset $s\in\mathcal{S}$, there exists a distribution $Q_W^{(t)}$ such that
\begin{align}
 & D_{\mathrm{KL}} \bigl(P_{W|S=s}\,\|\,Q_W^{(t)}\bigr) \nonumber\\
&\le \min_{t\in \{1,2,\dots,N\}}
\Big\{
(|\mathcal Z|-1)\frac{\varepsilon N}{t}
+
(|\mathcal Z|-1)\log \left(t+\frac{|\mathcal Z|-2}{2}\right) \nonumber \\
&\qquad
-\log ((|\mathcal Z|-1)!)
-\mathds 1_{\{t\geq 2\}}\log \bigl(1+e^{-\varepsilon\Delta_t}\bigr)\Big\}.
\label{eq:prop3_overlap_master_no_m} 
\end{align}
\end{theorem}
\fi

\ifshortver
The proof of Theorem~\ref{thm::prop3_plus_overlap} can be found in the full version~\cite{liu2026typical_arxiv}. 
\else 
The proof of Theorem~\ref{thm::prop3_plus_overlap} can be found in Appendix~\ref{app::prop3_plus_overlap}. 
\fi

Furthermore, similar to the idea of~\cite[Proposition 4]{rodriguez2021upper}, \textcolor{black}{we can employ the strong typicality to derive Theorem~\ref{thm::prop4_improved_overlap}, whose proof can be found in~\cite{liu2026typical_arxiv}. 
However, we note that Theorem~\ref{thm::prop4_improved_overlap} is not uniformly tighter than Theorem~\ref{thm::prop3_plus_overlap}, e.g., see Figure~\ref{fig::N_7_Z_6}. 
}

\begin{theorem} 
\label{thm::prop4_improved_overlap}
Let $\mathcal A$ be an $\varepsilon$-DP algorithm under the dataset distance $\mathsf d(\cdot,\cdot)$.
Fix $|\mathcal Z|\geq 2$ and $N\geq 1$.
For every dataset $s\in\mathcal{S}$, there exists a distribution $Q_W^{(t)}$ such that
\begin{align}
& I(S;W) \le \min_{t\in\{1,2,\dots,\lfloor 2\sqrt{N\log N}\rfloor\}}
\Big\{ \frac{\varepsilon |\mathcal Z|\sqrt{N\log N}}{t}
\nonumber\\
&\quad \quad \quad \quad
 -\mathds 1_{\{t\ge 2\}}\log \bigl(1+e^{-\varepsilon (\lceil 2\sqrt{N\log N}/t \rceil+1)}\bigr) + \frac{2 |\mathcal Z| \varepsilon}{N}  \nonumber \\
&\quad \quad \quad \quad
+ \mathds 1_{\{t\ge 2\}} 
\min\{ |\mathcal Z|\log t, \log(|\mathcal Z|t^{|\mathcal Z| -  1})\}\Big\}
\label{eq:prop4_improved_main}
\end{align} 
\end{theorem}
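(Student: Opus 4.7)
The plan is to port the strong-typicality argument of \cite[Proposition~4]{rodriguez2021upper} into our framework and combine it with the refined overlap-mixture inequality of Lemma~\ref{lem::dp_overlap_mixture}. The starting point is the golden formula $I(S;W) \le \mathbf{E}_{S}[D_{\mathrm{KL}}(P_{W|S}\,\Vert\,Q_W^{(t)})]$, valid for any auxiliary $Q_W^{(t)}$, so the task is to design a good $Q_W^{(t)}$ tailored to a strongly typical region of the type simplex.

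First I would restrict attention to a strongly typical region. Fix $R \triangleq \sqrt{N\log N}$ and set $A \triangleq \{s\in\mathcal S : \|\mathsf N_s - NP_Z\|_\infty \le R\}$. Applying Hoeffding's inequality coordinatewise to each count $\mathsf N_s(a)$ with a union bound over the $|\mathcal Z|$ letters yields $\mathbf P(S\notin A) \le 2|\mathcal Z|/N^2$. Since group privacy \eqref{eq::group_DP} gives $D_{\mathrm{KL}}(P_{W|S=s}\,\Vert\,P_{W|S=s'}) \le \varepsilon N$ uniformly, the atypical contribution to $\mathbf E_S[D_{\mathrm{KL}}(P_{W|S}\,\Vert\,Q_W^{(t)})]$ is at most $(2|\mathcal Z|/N^2)\cdot\varepsilon N = 2|\mathcal Z|\varepsilon/N$, which is precisely the additive residual term in \eqref{eq:prop4_improved_main}.

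Next I would construct $Q_W^{(t)}$ by covering only the typical region. Partition the count hypercube of side $2R = 2\sqrt{N\log N}$ in the first $|\mathcal Z|-1$ coordinates into $t^{|\mathcal Z|-1}$ cells of side $\ell = 2R/t$ and pick one center-atom dataset per cell using the construction preceding Theorem~\ref{thm::improved_KL_bd}; set $Q_W^{(t)} \triangleq M^{-1}\sum_{b=1}^M P_{W|S=s_b}$. A direct count gives $M \le t^{|\mathcal Z|-1}$ when $\ell\ge 1$ and $M \le |\mathcal Z|\, t^{|\mathcal Z|-1}$ after the one-unit boundary enlargement otherwise, while Claim~\ref{clm::typicality_bd} supplies the competing bound $\log M \le |\mathcal Z|\log t$; retaining the better of the two explains the $\min$ inside \eqref{eq:prop4_improved_main}. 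For every $s\in A$, some center $s_i$ satisfies $\mathsf d(s,s_i) \le (|\mathcal Z|-1)\ell/2 \le |\mathcal Z|\sqrt{N\log N}/t$, and Remark~\ref{rem::dP_Zpecialization} with Lemma~\ref{lem::dp_overlap_mixture} then yield
\[
D_{\mathrm{KL}}(P_{W|S=s}\,\Vert\,Q_W^{(t)}) \le \varepsilon\,\mathsf d(s,s_i) + \log M - \log\!\bigl(1 + e^{-\varepsilon\Delta_t}\bigr),
\]
where $\Delta_t \triangleq \lceil 2\sqrt{N\log N}/t\rceil + 1$ upper-bounds the dataset distance between two adjacent centers, so at least one overlap term of this form survives as soon as $t\ge 2$.

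The last step is to take the expectation $\mathbf E_S[D_{\mathrm{KL}}(P_{W|S}\,\Vert\,Q_W^{(t)})]$, split it as $\mathbf E[\,\cdot\,\mathds 1_A] + \mathbf E[\,\cdot\,\mathds 1_{A^{c}}]$, plug the typical-region bound above into the first summand and the $2|\mathcal Z|\varepsilon/N$ residual into the second, and minimize over $t\in\{1,\dots,\lfloor 2\sqrt{N\log N}\rfloor\}$, the upper endpoint being the largest $t$ for which the typical region is non-degenerate. The main obstacle I anticipate is the careful accounting of the cover size $M$ and of the maximum in-cell distance when the cell side $\ell$ drops below $1$: the boundary-enlargement trick inflates $M$ by a polynomial factor in $|\mathcal Z|$, and it is this bookkeeping that produces the two candidates inside the $\min\{\cdot,\cdot\}$ in \eqref{eq:prop4_improved_main}. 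Everything else is a careful recombination of the covering-plus-overlap machinery already developed for Theorems~\ref{thm::improved_KL_bd} and \ref{thm::prop3_plus_overlap}.
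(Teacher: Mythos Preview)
Your high-level plan coincides with the paper's: invoke the golden formula, split $\mathbf E_{S}[D_{\mathrm{KL}}(P_{W|S}\,\|\,Q_W^{(t)})]$ into typical and atypical parts with threshold $R=\sqrt{N\log N}$ (so $\mathbf P(S\notin A)\le 2|\mathcal Z|/N^{2}$ via Hoeffding plus a union bound), control the atypical part by the crude estimate $D_{\mathrm{KL}}\le\varepsilon N$ to get the residual $2|\mathcal Z|\varepsilon/N$, cover the typical count region with a $t$-grid, take $Q_W^{(t)}$ as the uniform mixture over representatives, apply Lemma~\ref{lem::dp_overlap_mixture} together with group privacy on the typical part, and finally minimize over $t$.

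The concrete gap is exactly in the bookkeeping you flag as the ``main obstacle'': neither branch of $\min\{|\mathcal Z|\log t,\ \log(|\mathcal Z|\,t^{|\mathcal Z|-1})\}$ arises from the mechanism you name. Claim~\ref{clm::typicality_bd} bounds the total number of types $|\mathcal T_{\mathcal Z,N}|$ in terms of $N$ and $|\mathcal Z|$ only; it contains no grid parameter $t$ and cannot supply $\log M\le|\mathcal Z|\log t$. Likewise, the one-unit boundary enlargement in the representative-set construction merely absorbs rounding of the side length and does not produce a multiplicative factor $|\mathcal Z|$ in the cell count. In the paper the grid is laid in \emph{all} $|\mathcal Z|$ count coordinates (following \cite[Eq.~(72)]{rodriguez2021upper}): the naive cell count is then $t^{|\mathcal Z|}$, which gives the first branch; the second branch comes from the affine constraint $\sum_{a\in\mathcal Z}\mathsf N(a|s)=N$, which, once the first $|\mathcal Z|-1$ coordinate bins are fixed, confines the last coordinate to an interval of length at most $(|\mathcal Z|-1)\ell_t$ and hence to at most $|\mathcal Z|$ bins of length $\ell_t$, yielding $M\le|\mathcal Z|\,t^{|\mathcal Z|-1}$. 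Taking the smaller of the two gives the $\min$. So your skeleton is correct, but the step you single out as delicate is resolved by a different mechanism (dimension-$|\mathcal Z|$ grid plus the sum constraint) than the two you propose.
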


\ifshortver
\else
Note~\eqref{eq:prop4_improved_main} can be written in an equivalent form 
\begin{align*}
& I(S;W)
\le
\frac{\varepsilon |\mathcal Z|\sqrt{N\log N}}{t}
 -\mathds 1_{\{t\ge 2\}}\log \bigl(1+e^{-\varepsilon\Delta_t}\bigr) \nonumber \\
&\quad\,\,\,\,\,\,
+ \frac{2 |\mathcal Z| \varepsilon}{N}  
+ \mathds 1_{\{t\ge 2\}} 
\cdot \left(
|\mathcal{Z}|\log t - \left[\log \left(\frac{t}{|\mathcal{Z}|} \right)\right]_+
\right)
\end{align*} 
by considering $\min\{x,y\} = x - [x-y]_+$ where $[u]_+ := \max\{u, 0\}$. 
We choose to present~\eqref{eq:prop4_improved_main} in Theorem~\ref{thm::prop4_improved_overlap} since the meaning of each term is clearer. 
\fi
\medskip

We plot two representative scenarios and show comparison between our bounds and the bounds in~\cite{rodriguez2021upper}. 
\begin{enumerate}
    \item Figure~\ref{fig::N_3_Z_binary}: A \emph{small} dataset with $N=1\times 10^{3}$ and $|\mathcal{Z}|=2$. 
    The setting is typical in applications with limited sample sizes and binary outcomes/features, such as clinical studies or small A/B tests, 
    where we collect data from thousands of patients/users and each record carries a binary attribute (e.g., responder/non-responder, click/no-click). 
    Figure~\ref{fig::N_3_Z_binary} shows that Theorem~\ref{thm::improved_KL_bd}, \ref{thm::prop3_plus_overlap} and~\ref{thm::prop4_improved_overlap} strictly improve Proposition $2$, $3$ and $4$ in~\cite{rodriguez2021upper}, respectively, and our Theorem~\ref{thm::prop4_improved_overlap} provides the tightest guarantee.

    \item Figure~\ref{fig::N_7_Z_6}: A \emph{large} dataset with $N=1\times 10^{7}$ and $|\mathcal{Z}|=1\times 10^{6}$. 
    This regime arises in large-scale telemetry or interaction logs with high-cardinality categorical values, such as search queries, URLs, item IDs, or token IDs, 
    where one observes tens of millions of events drawn from a vocabulary on the order of millions. 
    Figure~\ref{fig::N_7_Z_6} shows that in this case we only substantially  improve the bounds in~\cite{rodriguez2021upper} when the privacy budget $\epsilon$ is small. 
\end{enumerate}

\ifshortver
\begin{figure}[!t]
    \centering
    \includegraphics[scale=0.245]{N_3_Z_binary.pdf}
    \caption{Dataset of size $N = 1\times 10^{3}$ with $|\mathcal{Z}| = 2$.} 
    \label{fig::N_3_Z_binary}
\end{figure}
\begin{figure}[!t]
    \centering
    \includegraphics[scale=0.245]{N_7_Z_6.pdf}
    \caption{Dataset of size $N = 1\times 10^{7}$ with $|\mathcal{Z}| = 1\times 10^{6}$. \textcolor{black}{The curve for Proposition 2 in~\cite{rodriguez2021upper} coincides with the curve for Theorem~\ref{thm::improved_KL_bd}.}} 
    \label{fig::N_7_Z_6}
\end{figure}

\else

\begin{figure}[htpb]
    \centering
    \includegraphics[scale=0.3]{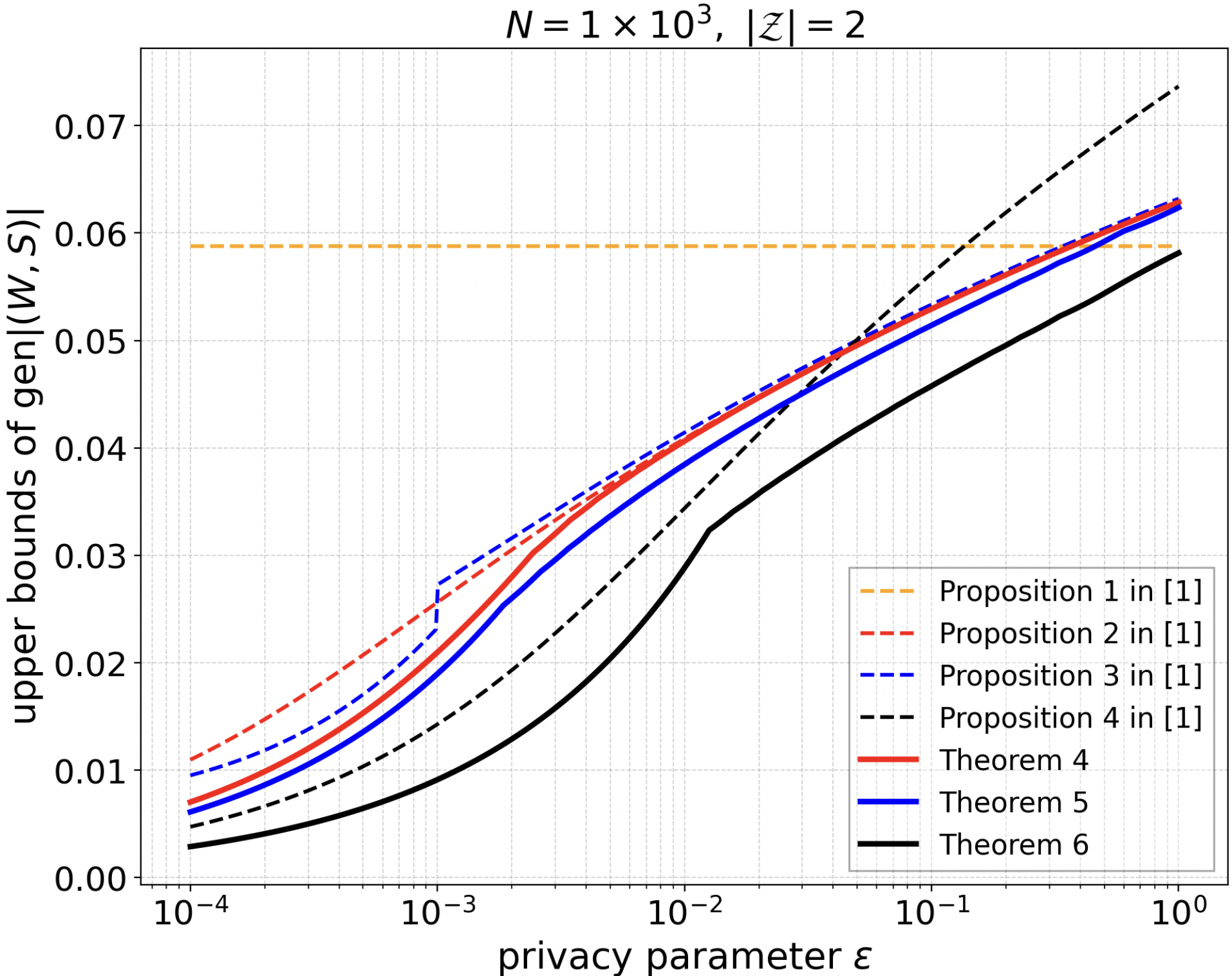}
    \caption{Dataset of size $N =1\times 10^{3}$ with $|\mathcal{Z}| = 2$.} 
    \label{fig::N_3_Z_binary}
\end{figure}
\begin{figure}[htpb]
    \centering
    \includegraphics[scale=0.3]{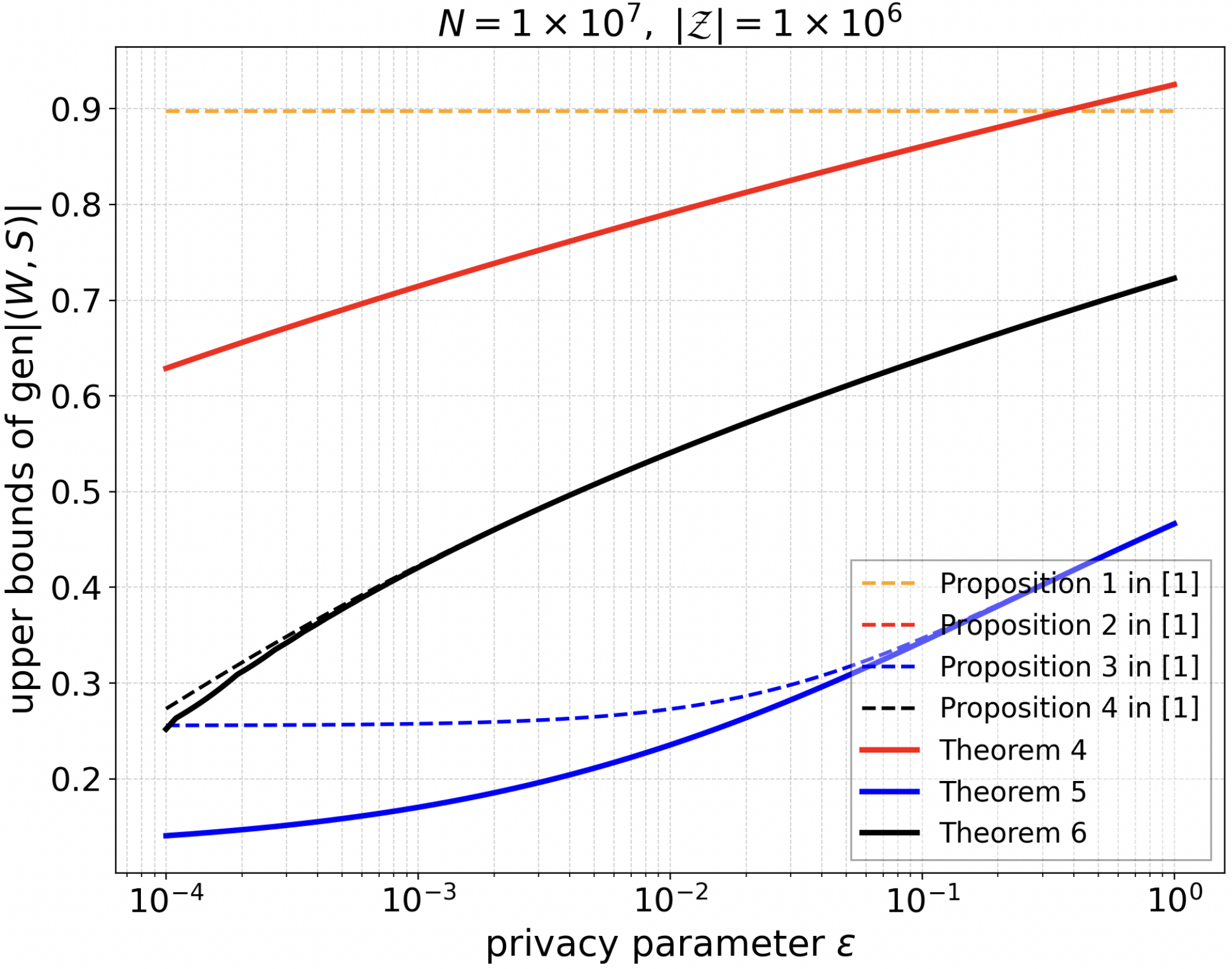}
    \caption{Dataset of size $N = 1\times 10^{7}$ with $|\mathcal{Z}| = 1\times 10^{6}$.} 
    \label{fig::N_7_Z_6}
\end{figure}
\fi

\section{Maximal Leakage Bounds}
\label{sec::ML}

In this section, we present novel generalization bounds that can be efficiently computed by bounding maximal leakage and utilizing~\eqref{eq::gen_bd_prob_ML}, also based on a typicality-based approach. 

\begin{definition}
    Given a joint distribution $P_{SW}$ on finite alphabets $\mathcal{S},\mathcal{W}$, we say the \emph{maximal leakage} from $S$ to $W$ is
    \begin{equation*}
        \mathcal L(S\to W) =\log \sum_{w\in\mathcal W}\max_{s\in\mathcal{S}: P_X(s)>0} P_{W|S}(w|s).
    \end{equation*}
\end{definition}

For discussions on maximal leakage for continuous channels, see~\cite{issa2019operational}. 
Note $\mathcal L(S\to W)$ is the Sibson's mutual information~\cite{verdu2015alpha, sibson1969information}  of order infinity, and it upper bounds mutual information. 
For an $\varepsilon$-DP mechanism, we have~\cite{issa2019operational}
\begin{equation}
    \mathcal{L}(S\rightarrow W)\leq \min \{N\varepsilon, \log|\mathcal{W}|\}. \label{eq::leakage_bd}
\end{equation}

The key advantage of maximal leakage is that it depends on the sample distribution only through its support. Consequently, the corresponding generalization error bounds are independent of the precise distribution over the samples and are particularly amenable to analysis, especially in additive noise settings. 
For generalization bounds via maximal leakage, see~\cite{esposito2021generalization, issa2023generalization}

Moreover, note that the usual way to keep the maximal leakage of an algorithm bounded is to add noise (e.g., Laplacian or Gaussian~\cite{esposito2021generalization}), which aligns with common differential privacy mechanisms~\cite{dwork2006calibrating}. 
Hence, for the study of the generalization error of differentially private algorithms, compared to bounding mutual information, it is more natural to bound the maximal leakage. 
\textcolor{black}{We then introduce our main results. }

\subsection{General Algorithms}

We first provide a general upper bound on maximal leakage, where $\mathcal{A} = P_{W|S}$ is not necessarily differentially private. 

\begin{theorem}
    \label{thm::gen_bd_ML_type}
    Let $S = (Z_1,\ldots Z_N)\in\mathcal{Z}^N$ be a dataset of $N$ instances, each sampled i.i.d. from $P_Z$. 
    Let also $W\in\mathcal{W}$ be a hypothesis obtained with an algorithm $\mathcal{A}$, characterized by $P_{W|S}$. 
    Then the maximal leakage from the dataset to the output hypothesis satisfies 
    \begin{equation*}
        \mathcal{L}(S\rightarrow W)\leq 
        \left(
    |\mathcal{Z}|-1
    \right)\log(N+1).
    \end{equation*}
\end{theorem}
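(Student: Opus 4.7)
The plan is to exploit the fact that the conditional distribution $P_{W|S=s}$ depends on $s$ only through its type, combined with a counting argument over types.

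First I would write the definition of maximal leakage explicitly:
\[
\mathcal L(S\to W)=\log \sum_{w\in\mathcal W}\max_{s\in\mathcal S: P_S(s)>0} P_{W|S}(w|s).
\]
Since $\mathcal A$ is assumed to be permutation-invariant, $P_{W|S=s}$ depends on $s$ only through the multiset $\{Z_1,\dots,Z_N\}$, equivalently through its type $T_s$. Hence for every $w$ the inner maximum over $s$ collapses to a maximum over types $t\in \mathcal T_{\mathcal Z,N}$; writing $P^{(t)}_W$ for the common conditional distribution shared by all sequences of type $t$, we get
\[
\sum_{w\in\mathcal W}\max_{s}P_{W|S}(w|s)
=\sum_{w\in\mathcal W}\max_{t\in\mathcal T_{\mathcal Z,N}} P^{(t)}_W(w).
\]

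Next I would apply Lemma~\ref{lem::sum_cover} to the finite family $\{P^{(t)}_W\}_{t\in\mathcal T_{\mathcal Z,N}}$, which gives
\[
\sum_{w\in\mathcal W}\max_{t\in\mathcal T_{\mathcal Z,N}} P^{(t)}_W(w)\leq |\mathcal T_{\mathcal Z,N}|.
\]
Taking logarithms and invoking the type-counting bound in Claim~\ref{clm::rodriguez2021upper_typicality_bd} (a slightly sharper constant would follow from Claim~\ref{clm::typicality_bd}, but for the stated theorem the weaker bound suffices), we obtain
\[
\mathcal L(S\to W)\leq \log |\mathcal T_{\mathcal Z,N}| \leq (|\mathcal Z|-1)\log(N+1),
\]
which is the desired inequality.

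There is not really a hard step here: the entire argument is a two-line reduction. The only point deserving care is making sure that restricting the maximum to the support of $P_S$ does not break the reduction to types; this is fine because every type with nonzero probability under $P_Z^{\otimes N}$ corresponds to at least one $s$ with $P_S(s)>0$, and any type with all sequences of zero probability can simply be excluded from the family before applying Lemma~\ref{lem::sum_cover}, only tightening the bound.
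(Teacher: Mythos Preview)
Your proposal is correct and follows essentially the same approach as the paper: reduce the maximum over datasets to a maximum over types via permutation invariance, apply Lemma~\ref{lem::sum_cover} to the resulting finite family indexed by types, and finish with the type-counting bound of Claim~\ref{clm::rodriguez2021upper_typicality_bd}. The paper phrases the reduction as the identity $\mathcal L(S\to W)=\mathcal L(T_S\to W)$ before invoking Lemma~\ref{lem::sum_cover}, but this is only a cosmetic difference from your direct manipulation, and your remark about restricting to types in the support of $P_S$ mirrors the paper's use of $\mathrm{supp}(T_S)$.
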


\ifshortver
The proof of Theorem~\ref{thm::gen_bd_ML_type} can be found in the full version~\cite{liu2026typical_arxiv}. 
\else 
The proof of Theorem~\ref{thm::gen_bd_ML_type} can be found in Appendix~\ref{app::gen_bd_ML_type}. 
\fi

Theorem~\ref{thm::gen_bd_ML_type} exactly recovers~\cite[Proposition 1]{rodriguez2021upper} (also see~\eqref{eq::rodri_MI_prop1}) since it is known that maximal leakage upper-bounds mutual information. 

If we utilize the typicality bound in our Claim~\ref{clm::typicality_bd} instead, Theorem~\ref{thm::gen_bd_ML_type} can be strengthened to 
\ifshortver
\newpage
\fi
\begin{align*}
    & \mathcal{L}(S\rightarrow W)\\
    & \leq 
    \log \left( \frac{1}{\sqrt{2\pi (|\mathcal{Z}|-1)}}\left(\frac{e}{(|\mathcal{Z}|-1)}\left(N+\frac{|\mathcal{Z}|}{2}\right)\right)^{(|\mathcal{Z}|-1)} \right) \\
    & =  (|\mathcal{Z}|-1)\Bigg(1 + \log\bigg(\frac{N+\frac{|\mathcal{Z}|}{2}}{|\mathcal{Z}|-1}\bigg)\Bigg)
 - \frac{\log \bigl(2\pi (|\mathcal{Z}|-1)\bigr)}{2}.
\end{align*}

Plugging the result into~\eqref{eq::gen_bd_prob_ML}, we derive a more explicit generalization error bound: for every $\eta\in(0,1)$, 
\begin{align*}
    & \mathbf{P}\left(\big|\mathrm{gen}(W,S)\big|\geq\eta \right) \\
    & \leq \frac{2}{\sqrt{2\pi k}}\left(\frac{e}{k}\left(N+\frac{k+1}{2}\right)\right)^k \cdot 
    \exp\left(- \frac{N\eta^2}{2\sigma^2}
    \right).
\end{align*}

\subsection{Differentially Private Algorithms}

We then discuss $\varepsilon$-DP algorithms. 
In the region $1/N<\varepsilon\leq 1$, we derive the following results that improves Theorem~\ref{thm::gen_bd_ML_type}. 

{
\color{black}
\begin{theorem}
    \label{thm::gen_bd_ML_eps_DP}
    Let $S = (Z_1,\ldots,Z_N)$ be a dataset of size $N$ where $Z_i\in\mathcal{Z}$ are sampled i.i.d. from $P_Z$. 
    Let $W\in\mathcal{W}$ be a hypothesis obtained from an $\varepsilon$-DP algorithm $\mathcal{A}$ whose induced distribution is $P_{W|S}$. 
    When $1/N<\varepsilon \leq 1$, 
    \begin{equation*}
                \mathcal{L}(S\rightarrow W) \leq (|\mathcal{Z}|-1) \log\big(
                e(N\varepsilon  + 1)
                \big). 
            \end{equation*}
\end{theorem}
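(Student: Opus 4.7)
The plan is to adapt the representative-set construction from the proof of Theorem~\ref{thm::improved_KL_bd} to the maximal-leakage setting by combining group privacy with Lemma~\ref{lem::sum_cover}, and then optimize over the resolution parameter. First, I would fix $t\in\{1,2,\dots,N\}$ and build the same representative set $\mathcal S_0=\{s_1,\dots,s_M\}$ covering the count hypercube by $t^{|\mathcal Z|-1}$ subcubes of side $N/t$; this gives $M\le t^{|\mathcal Z|-1}$ and guarantees that for every $s\in\mathcal S$ there is an index $i=i(s)$ with $\mathsf d(s,s_i)\le (|\mathcal Z|-1)N/t$, exactly as established in Section~\ref{sec::MI}.

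The key step is to apply group privacy \eqref{eq::group_DP} \emph{pointwise in $w$} before taking any maxima: for every $w\in\mathcal W$,
\[
P_{W|S=s}(w)\ \le\ e^{\varepsilon \mathsf d(s,s_i)}\,P_{W|S=s_i}(w)\ \le\ e^{\varepsilon(|\mathcal Z|-1)N/t}\max_{1\le j\le M}P_{W|S=s_j}(w).
\]
Taking $\max_s$ on the left, summing over $w$, and applying Lemma~\ref{lem::sum_cover} on the right with $P_j=P_{W|S=s_j}$ collapses the sum to $M$:
\[
\sum_{w}\max_{s}P_{W|S=s}(w)\ \le\ e^{\varepsilon(|\mathcal Z|-1)N/t}\,M\ \le\ e^{\varepsilon(|\mathcal Z|-1)N/t}\,t^{|\mathcal Z|-1},
\]
so that $\mathcal L(S\to W)\le(|\mathcal Z|-1)\bigl(\varepsilon N/t+\log t\bigr)$ for every $t\in\{1,\dots,N\}$.

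To finish, I would optimize over $t$; the continuous minimizer is $t^\star=\varepsilon N$. When $\varepsilon\le 1/N$ the only feasible improvement is $t=1$, which yields $\mathcal L(S\to W)\le (|\mathcal Z|-1)\varepsilon N$. When $1/N<\varepsilon\le 1$ I would take $t=\lceil \varepsilon N\rceil\in\{2,\dots,N\}$, so that $\varepsilon N/t\le 1$ and $\log t\le\log(\varepsilon N+1)$, producing $(|\mathcal Z|-1)(1+\log(N\varepsilon+1))=(|\mathcal Z|-1)\log\bigl(e(N\varepsilon+1)\bigr)$. The only nontrivial step, and the main obstacle, is noticing that maximal leakage has precisely the \emph{max-then-sum} structure that lets the representative-set machinery of Section~\ref{sec::MI} transfer almost verbatim: group privacy is used pointwise in $w$ before the outer maximum and the outer sum, so Lemma~\ref{lem::sum_cover} collapses the sum to the covering number $M$ at a uniform multiplicative cost $e^{\varepsilon(|\mathcal Z|-1)N/t}$. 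Once this is observed, the two-regime tradeoff is the same $t$-minimization already used for the KL bounds.
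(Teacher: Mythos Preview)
Your proposal is correct and matches the paper's proof essentially step for step: the same representative-set covering with $M\le t^{|\mathcal Z|-1}$ and radius $(|\mathcal Z|-1)N/t$, the pointwise-in-$w$ use of group privacy to pass to $\max_{\bar s\in\mathcal S_0}P_{W|S=\bar s}(w)$, the application of Lemma~\ref{lem::sum_cover} to collapse the sum, and the same two-regime optimization over $t$ (with $t=1$ for $\varepsilon\le 1/N$ and $t=\lceil \varepsilon N\rceil$ otherwise). The only cosmetic difference is that the paper phrases the covering in the type simplex $[0,1]^{|\mathcal Z|-1}$ with side $1/t$ rather than in the count hypercube, which is an equivalent rescaling.
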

\ifshortver
The complete proof of Theorem~\ref{thm::gen_bd_ML_eps_DP} is in the full version~\cite{liu2026typical_arxiv}. 
\else 
The proof of Theorem~\ref{thm::gen_bd_ML_eps_DP} can be found in Appendix~\ref{app::gen_bd_ML_eps_DP}. 
\fi

Theorem~\ref{thm::gen_bd_ML_eps_DP} improves~\eqref{eq::leakage_bd} whenever $|\mathcal Z|-1<\frac{N\varepsilon}{1+\log(N\varepsilon+1)}$.
}

The proof relies on a special choice of representative sets. This approach can be combined with the refined constructions of representative sets developed in the proofs of Theorems~\ref{thm::prop3_plus_overlap} and~\ref{thm::prop4_improved_overlap}. 
Consequently, it is possible to integrate these constructions for mutual information bounds into the current proof of Theorem~\ref{thm::gen_bd_ML_eps_DP} and further strengthen our maximal leakage bounds. 
We leave this direction for future work.

\section{Concluding Remarks}
\label{sec::conclu}

In this paper, we have provided two classes of explicit, easily computable generalization error bounds: one based on mutual information and the other based on maximal leakage.
For the former class, we strictly improved the bounds in~\cite{rodriguez2021upper}, and the comparison is shown in Figure~\ref{fig::N_3_Z_binary} and Figure~\ref{fig::N_7_Z_6}.
Application of the techniques in Section~\ref{sec::MI} to the maximal leakage bounds in Section~\ref{sec::ML} is non-trivial and left for future work.
Moreover, it is still possible to improve bounds in both classes through better choices of the mixture distribution in Lemma~\ref{lem::rodriguez2021upper_KL_bd}. 

In~\cite{rodriguez2021upper}, another class of private algorithms based on Gaussian differential privacy~\cite{dong2022gaussian} has been studied.
It is also of interest to study the generalization error of Gaussian DP algorithms as well, together with other extensions of differential privacy~\cite{mironov2017renyi, kairouz2015composition}.
If specific privatization mechanisms are specified, sharper bounds are expected to be obtainable.

{\color{black}
\section{Acknowledgement}

The authors would like to thank the anonymous reviewers of this article for their valuable suggestions. 

}

\ifshortver
\newpage
\bibliographystyle{IEEEtran}
\bibliography{ref}
\fi

\ifshortver
\else
\appendices

\section{Proof of Claim~\ref{clm::typicality_bd}}
\label{app::typicality_bd}

\begin{proof}
    Denote $k := |\mathcal{Z}|-1$, we calculate 
    \begin{align*}
        |\mathcal T_{\mathcal Z,N}|
    & = \binom{N+k}{k} \\
    & = \frac{(N+k)!}{N!\,k!} \\
    & = \frac{\prod_{i=1}^k (N+i)}{k!}
    \\ 
    & \stackrel{(a)}{\leq} \frac{\left(\frac{1}{k}\sum_{i=1}^k (N+i)\right)^k}{k!} \\
    & = \frac{\left(N+\frac{k+1}{2}\right)^k}{k!}\\
    & \stackrel{(b)}{\leq} \frac{\left(N+\frac{k+1}{2}\right)^k}{\sqrt{2\pi k}\,\left(\frac{k}{e}\right)^k} \\
    & = \frac{1}{\sqrt{2\pi k}}\left(\frac{e}{k}\left(N+\frac{k+1}{2}\right)\right)^k , 
    \end{align*}
    where $(a)$ is by the AM-GM inequality applied to $N+1,\ldots,N+k$ and get $\prod^k_{i=1} (N+i)\leq (N + (k+1)/2)^k$, and $(b)$ is by the Stirling's lower bound $k!\geq \sqrt{2\pi k}(k/e)^k$. 
\end{proof}

\section{Proof of Lemma~\ref{lem::dp_overlap_mixture}}
\label{app::proof_dp_overlap_mixture}

\begin{proof}
    Since for every measurable set $E$ we have 
    \begin{equation*}
        Q_b(E)\geq \alpha_{b,i} Q_i(E),
    \end{equation*}
    it follows that
    \begin{align*}
        Q(E) 
        & = \sum_{b=1}^M \omega_b Q_b(E) \\
        & \geq \sum_{b=1}^M \omega_b \alpha_{b,i} Q_i(E) \\
        & =: c_i\, Q_i(E),
    \end{align*}
    where $c_i := \sum_{b=1}^M \omega_b \alpha_{b,i}$.
    Therefore, we have $\log \frac{\mathrm{d}Q_i}{\mathrm{d}Q}\leq -\log c_i$, $Q_i$-a.e.. 
    Since $P\ll Q_i$, we obtain
    \begin{align*}
        D_{\mathrm{KL}}(P\Vert Q) 
        & = \mathbf{E}_P\left[\log \frac{\mathrm{d}P}{\mathrm{d}Q}\right] \\
        & = \mathbf{E}_P\left[\log \frac{\mathrm{d}P}{\mathrm{d}Q_i} +  \log \frac{\mathrm{d}Q_i}{\mathrm{d}Q}\right] \\
        & \leq  \mathbf{E}_P\left[\log \frac{\mathrm{d}P}{\mathrm{d}Q_i} \right] - \log c_i \\
        & =  D_{\mathrm{KL}}(P\Vert Q_i) - \log c_i,
    \end{align*}
    which proves the claim.
\end{proof}

\section{Proof of Theorem~\ref{thm::improved_KL_bd}}
\label{app::improved_KL_bd}

\begin{proof}
Fix $t\in\{1,2,\dots,N\}$. 
Denote the cardinality of alphabet $\mathcal Z$ by $m := |\mathcal Z|$ where $m\geq 2$. 

Construct the representative set $\mathcal{S}_0 =\{s_1,\dots,s_M\}$ with parameter $t$ such that $1\leq t\leq N$. 
Define the uniform mixture
\[
Q_W^{(t)} := \frac{1}{M}\sum_{b=1}^M P_{W|S=s_b}.
\]

Fix an arbitrary dataset $s$.
By the same covering argument used for~\cite[Eq. (52)]{rodriguez2021upper}, there exists some representative $s_i\in\mathcal{S}_0 $ such that
\begin{equation}
\label{eq:cover_used_again}
 \mathsf{d}(s,s_i)\ \le\ \frac{N}{t}(m-1).
\end{equation}

Since $\mathcal A$ is $\varepsilon$-DP, group privacy implies for all measurable $E$,
\[
P_{W|S=s_b}(E)\ \ge\ e^{-\varepsilon  \mathsf{d}(s_b,s_i)}\,P_{W|S=s_i}(E),
\]
we apply our Lemma~\ref{lem::dp_overlap_mixture} with $\omega_b= 1/M$ and $\alpha_{b,i}=e^{-\varepsilon  \mathsf{d}(s_b,s_i)}$, and get 
\begin{align}
& D_{\mathrm{KL}} \bigl(P_{W|S=s}\,\|\,Q_W^{(t)}\bigr)
\nonumber \\
& \le
D_{\mathrm{KL}} \bigl(P_{W|S=s}\,\|\,P_{W|S=s_i}\bigr)
-\log \Big(\frac{1}{M}\sum_{b=1}^M e^{-\varepsilon  \mathsf{d}(s_b,s_i)}\Big). \label{eq:lemma_used_again}
\end{align}

Take $k:= \mathsf{d}(s,s_i)$, by~\eqref{eq::group_DP}, 
\[
D_{\mathrm{KL}} \bigl(P_{W|S=s}\,\|\,P_{W|S=s_i}\bigr)\ \le\ k\varepsilon.
\]
Combining with \eqref{eq:cover_used_again} gives
\begin{equation}
\label{eq:first_term_linear_again}
D_{\mathrm{KL}} \bigl(P_{W|S=s}\,\|\,P_{W|S=s_i}\bigr)
\ \le\ (m-1)\frac{\varepsilon N}{t}.
\end{equation}

For the case of $t=1$, there is only one grid cube, therefore $M=1$ and
$(1/M)\sum_{b=1}^M e^{-\varepsilon  \mathsf{d}(s_b,s_i)}=1$, and hence the last term in  \eqref{eq:lemma_used_again} is $0$.

For the case of $t\geq 2$, we have $M\geq 2$ and there exists some $j\neq i$ with $s_j\in\mathcal{S}_0 $. 
Since for any two size-$N$ datasets $u,v$ one always has $\mathsf{d}(u,v)\le N$.
Therefore,
\begin{align*}
    & \sum_{b=1}^M e^{-\varepsilon  \mathsf{d}(s_b,s_i)} \\
    & \geq e^{-\varepsilon  \mathsf{d}(s_i,s_i)} + e^{-\varepsilon  \mathsf{d}(s_j,s_i)} \\
    & \geq 1 + e^{-\varepsilon N},
\end{align*}
and hence
\begin{equation}
\label{eq:overlap_penalty_again}
-\log \Big(\frac{1}{M}\sum_{b=1}^M e^{-\varepsilon  \mathsf{d}(s_b,s_i)}\Big)
 \le
\log M - \log \bigl(1+e^{-\varepsilon N}\bigr).
\end{equation}

Since  there are at most $t^{|\mathcal{Z}|-1}$ smaller hypercubes, we have $M\le t^{m-1}$, and hence $\log M\le (m-1)\log t$.

Combining \eqref{eq:lemma_used_again}, \eqref{eq:first_term_linear_again}, and \eqref{eq:overlap_penalty_again}, we know for every $s$,
\begin{align*}
   & D_{\mathrm{KL}} \bigl(P_{W|S=s}\,\|\,Q_W^{(t)}\bigr)
\\ 
& \leq  (m-1)\frac{\varepsilon N}{t} + (m-1)\log t - \mathds 1_{\{t\geq 2\}}\log \bigl(1+e^{-\varepsilon N}\bigr).
\end{align*}

One can choose $t$ minimizing the right-hand side and define $Q_W := Q_W^{(t)}$ for that minimizing $t$. 
\end{proof}

\section{Proof of Theorem~\ref{thm::prop3_plus_overlap}}
\label{app::prop3_plus_overlap}

\begin{proof}
Fix $t\in\{1,2,\dots,N\}$.
Let $\mathcal{S}_0 = \{s_1,\dots,s_M\}$ be the representative set constructed as follows: 
instead of constructing a covering of the space $[0,N]^{|\mathcal{Z}| - 1}$ with $t^{|\mathcal{Z}| - 1}$ hypercubes, here we only count the hypercubes strictly needed to cover the hypervolume under the $(|\mathcal{Z}|-2)$-simplex, which has $|\mathcal{Z}| - 1$ perpendicular edges of length $N$ which intersect at the origin. 
By~\cite[Lemma 3]{rodriguez2021upper}, the number of hypercubes needed to cover the hypervolume under the $(|\mathcal{Z}|-2)$-simplex is given by~\eqref{eq::hypercubes needed}.  
From such $\mathcal{S}_0$, we define the uniform mixture
\[
Q_W^{(t)}:=\frac{1}{M}\sum_{b=1}^M P_{W|S=s_b}.
\]

From the same covering argument as in the proof of Theorem~\ref{thm::improved_KL_bd}, there exists an $s_i\in\mathcal{S}_0$ such that
\begin{equation}
\label{eq:cover_radius_prop3_no_m}
\mathsf d(s,s_i)\le (|\mathcal Z|-1)\frac{N}{t}.
\end{equation}

Since $\mathcal A$ is $\varepsilon$-DP, for all measurable $E$,group privacy gives, 
\[
P_{W|S=s_b}(E)\geq e^{-\varepsilon \mathsf d(s_b,s_i)}P_{W|S=s_i}(E).
\]

We apply Lemma~\ref{lem::dp_overlap_mixture} with $\omega_b=1/M$ and $\alpha_{b,i}=e^{-\varepsilon \mathsf d(s_b,s_i)}$, which yields
\begin{align}
& D_{\mathrm{KL}} \bigl(P_{W|S=s}\,\|\,Q_W^{(t)}\bigr)
\nonumber \\
&\le
D_{\mathrm{KL}} \bigl(P_{W|S=s}\,\|\,P_{W|S=s_i}\bigr)
-\log \Big(\frac{1}{M}\sum_{b=1}^M e^{-\varepsilon \mathsf d(s_b,s_i)}\Big).
\label{eq:overlap_lemma_step_no_m}
\end{align}

Take $k:=\mathsf d(s,s_i)$. 
By~\cite[Claim 2]{rodriguez2021upper} we know
\[
D_{\mathrm{KL}} \bigl(P_{W|S=s}\,\|\,P_{W|S=s_i}\bigr)\le k\varepsilon.
\]
Combining with \eqref{eq:cover_radius_prop3_no_m} gives
\begin{equation}
\label{eq:first_term_no_m}
D_{\mathrm{KL}} \bigl(P_{W|S=s}\,\|\,P_{W|S=s_i}\bigr)
\le (|\mathcal Z|-1)\frac{\varepsilon N}{t}.
\end{equation}

If $t=1$, then $M=1$ and the overlap term in \eqref{eq:overlap_lemma_step_no_m} is $0$.

If $t\geq 2$, we have $M\geq 2$ and moreover there exists $j\neq i$ such that
\begin{equation}
\label{eq:neighbor_dist_no_m}
\mathsf d(s_i,s_j)\le \Delta_t:=\Big\lceil\frac{N}{t}\Big\rceil+1.
\end{equation}
Therefore
\begin{align*}
\sum_{b=1}^M e^{-\varepsilon \mathsf d(s_b,s_i)}
& \geq 
e^{-\varepsilon \mathsf d(s_i,s_i)} + e^{-\varepsilon \mathsf d(s_j,s_i)} \\
& \geq 1+e^{-\varepsilon\Delta_t},
\end{align*}
and hence
\begin{equation}
\label{eq:overlap_penalty_no_m}
-\log \Big(\frac{1}{M}\sum_{b=1}^M e^{-\varepsilon \mathsf d(s_b,s_i)}\Big)
\le
\log M - \log \bigl(1+e^{-\varepsilon\Delta_t}\bigr).
\end{equation}

Then by the construction of the representative set described at the beginning of the proof (also see~\cite[Lemma 3]{rodriguez2021upper}), the number of kept hypercubes satisfies $M\le S_{|\mathcal Z|-1}(t)$ where 
\[
S_{|\mathcal Z|-1}(t):=\frac{1}{(|\mathcal Z|-1)!}\frac{(t+|\mathcal Z|-2)!}{(t-1)!}.
\]

We have $\log M\le \log S_{|\mathcal Z|-1}(t)$ and combining \eqref{eq:overlap_lemma_step_no_m}, \eqref{eq:first_term_no_m}, and \eqref{eq:overlap_penalty_no_m} gives
\begin{align*}
 & D_{\mathrm{KL}} \bigl(P_{W|S=s}\,\|\,Q_W^{(t)}\bigr) \nonumber\\
&\le
(|\mathcal Z|-1)\frac{\varepsilon N}{t}
+
\log S_{|\mathcal Z|-1}(t)
-\mathds 1_{\{t\geq 2\}}\log \bigl(1+e^{-\varepsilon\Delta_t}\bigr)
\end{align*}
where
\[
\Delta_t:=\Big\lceil \frac{N}{t}\Big\rceil+1. 
\]

Finally, by~\cite[Lemma 3]{rodriguez2021upper},
\begin{equation}
    S_{|\mathcal Z|-1}(t)\le \frac{1}{(|\mathcal Z|-1)!}\left(t+\frac{|\mathcal Z|-2}{2}\right)^{|\mathcal Z|-1},
\label{eq::hypercubes needed}
\end{equation}
we derive 
\begin{align*}
& D_{\mathrm{KL}} \bigl(P_{W|S=s}\,\|\,Q_W^{(t)}\bigr) \\
&\le
(|\mathcal Z|-1)\frac{\varepsilon N}{t}
+
(|\mathcal Z|-1)\log \left(t+\frac{|\mathcal Z|-2}{2}\right)  \\
&\qquad
-\log (|\mathcal Z|-1)!
-\mathds 1_{\{t\geq 2\}}\log \bigl(1+e^{-\varepsilon\Delta_t}\bigr).
\end{align*}
\end{proof}

\section{Proof of Theorem~\ref{thm::prop4_improved_overlap}}
\label{app::prop4_improved_overlap}

\begin{proof}
The main idea of the proof is to utilize analyses based on strong typicality, as used in~\cite[Appendix E]{rodriguez2021upper}, together with Lemma~\ref{lem::dp_overlap_mixture}.

Recall the definition of \emph{strong typical} set: 
\begin{align*}
    \mathcal T_\varepsilon^N(Z) := \{ Z^N : &\,\,
    \big| T_{Z^N}(a) - P_Z(a)\big| \leq \varepsilon\text{ if } P_Z(a)>0, \\
    & \mathsf{N}(a|Z^N)=0\text{ otherwise}
    \},
\end{align*}
and when the context is clear we use $\mathcal T$ for short. 

By the  union bound and Hoeffding's inequality, we know 
\begin{equation}
    \label{eq:typ_prob}
    P_Z^{\otimes N}(\mathcal{Z}^N \backslash \mathcal{T}) \leq 2|\mathcal{Z}|\exp(-2N\varepsilon^2). 
\end{equation}

Take $m := |\mathcal Z| \ge 2$ for the ease of notation. 

Take $\Delta_t := \lceil \ell_t\rceil+1$ where 
\[
\ell_t := \frac{2\sqrt{N\log N}}{t}.
\]

Take $\eta := \sqrt{(\log N) / N}$. 

By~\cite[Eq. (77)]{rodriguez2021upper}, 
\begin{align}
I(S;W)
&\le
\mathbf{E}\!\left[ D_{\mathrm{KL}}(P_{W|S}\|Q_W^{(t)}) \,\middle|\, S\in\mathcal T\right]\mathbb P(S\in\mathcal T) \nonumber \\
&\,\,\,\,\, +
\mathbf{E}\!\left[ D_{\mathrm{KL}}(P_{W|S}\|Q_W^{(t)}) \,\middle|\, S\notin\mathcal T\right]\mathbb P(S\notin\mathcal T).
\label{eq:decomp}
\end{align}

By~\cite[Eq. (72)]{rodriguez2021upper}, the typical count vector $\mathsf{N}_s$ lies inside an $m$-dimensional hypercube of side at most $2\sqrt{N\log N}$. 
We split each coordinate interval into $t$ parts, giving small hypercubes of side
$\ell_t\le 2\sqrt{N\log N}/t$. 
Take $\mathcal{S}_0=\{s_1,\dots,s_M\}\subset \mathcal T$ to be the set of representative datasets, and define the uniform mixture
\[
Q_W^{(t)} := \frac{1}{M}\sum_{b=1}^M P_{W|S=s_b},
\]
and we know the distance between any $s$ belonging to the $i$-th  hypercube and its center $s_i$ is bounded by 
\begin{equation}
\label{eq:d_to_center}
\mathsf d(s,s_i)\le \frac{m\sqrt{N\log N}}{t}. 
\end{equation}

Fix $s\in\mathcal T$ and let $s_i$ be the representative of its small hypercube, by group privacy
\[
D_{\mathrm{KL}}(P_{W|S=s}\|P_{W|S=s_i})\le \varepsilon\,\mathsf d(s,s_i).
\]

By our Lemma~\ref{lem::dp_overlap_mixture} with $\omega_b=1/M$, 
\begin{align}
& D_{\mathrm{KL}}(P_{W|S=s}\|Q_W^{(t)}) \nonumber \\
&\le
\varepsilon\,\mathsf d(s,s_i)
-\log\!\Big(\frac1M\sum_{b=1}^M e^{-\varepsilon\,\mathsf d(s_b,s_i)}\Big).
\label{eq:overlap_apply}
\end{align}

If $t=1$, then $M=1$ and the logarithmic term in \eqref{eq:overlap_apply} equals $0$.

If $t\ge 2$, since the small hypercubes form a regular grid and we keep only hypercubes that contain at least one type with total count $N$, the kept set satisfies $M\ge 2$ for $t\ge 2$ in the typical region described above.
Hence, for each fixed $i$ we can pick some $j\neq i$ and trivially bound $\mathsf d(s_i,s_j)\le \Delta_t$ with
\[
\Delta_t := \lceil \ell_t\rceil+1
\quad\text{and}\quad
\ell_t=\frac{2\sqrt{N\log N}}{t},
\]
which yields
\[
-\log\!\Big(\frac1M\sum_{b=1}^M e^{-\varepsilon\,\mathsf d(s_b,s_i)}\Big)
\le
\log M-\log(1+e^{-\varepsilon\Delta_t}).
\]

It remains to upper bound $M$. 

At first, since the $m$-dimensional hypercube is partitioned into $t^{m}$ smaller hypercubes, we have $M\le t^{m}$ and hence
\begin{equation}
\label{eq:M_count1}
\log M\le m \log t.
\end{equation}

For each choice of the first $(m-1)$ coordinate bins, the constraint
$\sum_{a\in\mathcal Z} \mathsf{N}(a|s)=N$ restricts the last coordinate to a range of length at most $(m-1)\ell_t$,
so it can intersect at most $m$ bins of length $\ell_t$. Hence the number of kept hypercubes satisfies $M \le m\,t^{m-1}$ and hence 
\begin{equation}
\label{eq:M_count}
\log M\le \log(m\,t^{m-1}).
\end{equation}

Combining \eqref{eq:M_count1} and \eqref{eq:M_count}, we have for $t\ge 2$,
\[
\log M \le \min\{\,m\log t,\ \log(m\,t^{m-1})\,\}.
\]

Therefore, combining \eqref{eq:d_to_center} and \eqref{eq:overlap_apply} gives, for all $s\in\mathcal T$,
\begin{align*}
    & D_{\mathrm{KL}}(P_{W|S=s}\|Q_W^{(t)}) \\
    & \le  
    \frac{\varepsilon m\sqrt{N\log N}}{t}
    -\mathds 1_{\{t\ge 2\}}\log(1+e^{-\varepsilon\Delta_t}), \\
    & \qquad + \mathds 1_{\{t\ge 2\}} 
\min\{ |\mathcal Z|\log t, \log(|\mathcal Z|t^{|\mathcal Z| -  1})\}. 
\end{align*}

For the first term in \eqref{eq:decomp}, taking conditional expectation over $S\in\mathcal T$ gives the same bound. 

For the second term, \cite[Lemma 2]{rodriguez2021upper} and Jensen's inequality  give $D_{\mathrm{KL}}(P_{W|S=s}\|Q_W^{(t)})\le \varepsilon N$ for $s\notin\mathcal T$. 
Multiplied by $\mathbb P(S\notin\mathcal T)\le 2mN^{-2}$ to obtain
\[
\mathbf{E}\!\left[ D_{\mathrm{KL}}(P_{W|S}\|Q_W^{(t)}) \,\middle|\, S\notin\mathcal T\right]\mathbb P(S\notin\mathcal T)
\le \frac{2m\varepsilon}{N}.
\]

Substituting the typical and atypical bounds into \eqref{eq:decomp} yields \eqref{eq:prop4_improved_main}.
\end{proof}



\section{Proof of Theorem~\ref{thm::gen_bd_ML_type}}
\label{app::gen_bd_ML_type}

{
\color{black}

\begin{proof}
Let $T_S$ denote the type of $S$. Since $\mathcal A$ is permutation invariant, two datasets with the same type induce the same output distribution. Hence there exists a conditional distribution $P_{W|T}$ such that
\[
P_{W|S=s}(\cdot)=P_{W|T}(\cdot|T_s), \qquad s\in\mathcal{S},
\]
that is, $W$ depends on $S$ only through $T_S$.

Therefore,
\begin{align*}
    \mathcal{L}(S\rightarrow W) 
    &= \log \sum_{w\in\mathcal W}\max_{s:\,P_S(s)>0} P_{W|S}(w|s) \\
    &= \log \sum_{w\in\mathcal W}\max_{s:\,P_S(s)>0} P_{W|T}(w|T_s) \\
    &= \log \sum_{w\in\mathcal W}\max_{t:\,P_{T_S}(t)>0} P_{W|T}(w|t) \\
    &= \mathcal{L}(T_S\rightarrow W)\\
    & \leq \log |\mathrm{supp}(T_S)|\\
    & \leq \log |\mathcal{T}_{\mathcal{Z}, N}|,
\end{align*}
where the first inequality follows by \cite[Lemma 1]{issa2019operational} and the second inequality follows by $T_S\in \mathcal{T}_{\mathcal{Z}, N}$.

Finally, since $\mathrm{supp}(T_S)\subseteq \mathcal{T}_{\mathcal{Z},N}$ and
\[
|\mathcal{T}_{\mathcal{Z},N}|\leq (N+1)^{|\mathcal Z|-1}
\]
by~\cite[Claim~1]{rodriguez2021upper}, we obtain
\[
\mathcal{L}(S\rightarrow W)
\leq
\log |\mathcal{T}_{\mathcal{Z},N}|
\leq
(|\mathcal Z|-1)\log(N+1).
\]
\end{proof}

}

\section{Proof of Theorem~\ref{thm::gen_bd_ML_eps_DP}}
\label{app::gen_bd_ML_eps_DP}

\begin{proof}
    Fix $t\in \{1,\ldots,N\}$. 
    Similar to~\cite{rodriguez2021upper} we partition the $(|\mathcal{Z}| - 1)$ dimensional simplex by hypercubes of side $1/t$. 
    For each non-empty cell (which contains at least one type $T_s$), we choose one dataset $s_0$ whose type lies in that cell and collect all such representatives in the set $\mathcal{S}_0$. 
    It is straightforward to check that 
    \begin{equation*}
        |\mathcal{S}_0|\leq t^{|\mathcal{Z}|-1}.
    \end{equation*}
    Note if $T_s$ and $T_{s_0}$ are in the same cell, then for coordinates $a = 1,\ldots,|\mathcal{Z}|-1$ we know 
    \begin{equation*}
        |T_s(a) - T_{s_0}(a)|\leq \frac{1}{t}
    \end{equation*}
    and also the last coordinate satisfies 
    \begin{align*}
        |T_s(|\mathcal{Z}|) - T_{s_0}(|\mathcal{Z}|)| & = \left|
        \sum^{|\mathcal{Z}|-1}_{a=1}\big(
        T_s(a) - T_{s_0}(a)
        \big)
        \right| \\
        & \leq  \sum^{|\mathcal{Z}|-1}_{a=1}
        \left|
        T_s(a) - T_{s_0}(a)
        \right| \\
        & \leq \frac{|\mathcal{Z}|-1}{t}.
    \end{align*}
    Therefore 
    \begin{align*}
        \Vert T_s - T_{s_0} \Vert_1 &\leq  \sum^{|\mathcal{Z}|}_{a=1}
        \left|
        T_s(a) - T_{s_0}(a)
        \right| \\
        & \leq \frac{2(|\mathcal{Z}|-1)}{t}.
    \end{align*}
    
    Note we use the multiset distance~\cite{rodriguez2021upper}: 
    \begin{align*}
        \mathsf{d}(s, s_0) & = \frac{1}{2}\sum_{a\in\mathcal{Z}}\left|\mathsf{N}(a|s) - \mathsf{N}(a|s_0)\right| \\
        & = \frac{N}{2}\Vert T_s - T_{s_0} \Vert_1\\
        & \leq \frac{N}{2} \cdot \frac{2(|\mathcal{Z}|-1)}{t} \\
        & = \frac{N}{t}(|\mathcal{Z}|-1).
    \end{align*}

    For each $s\in\mathcal{S}$, let $s_0(s)\in\mathcal{S}_0 $ be the representative whose type lies in the same cell as $T_s$,
    and define $k(s):=d(s,s_0(s))$. By the covering argument above, for all $s$,
    \[
    k(s)\le K := \frac{N}{t}(|\mathcal Z|-1).
    \]
    
    Hence, for all $s\in\mathcal{S}$ and all $w\in\mathcal W$,
    \begin{align*}
    P_{W|S=s}(w)
    &\le e^{\varepsilon k(s)}P_{W|S=s_0(s)}(w)\\
    &\le e^{\varepsilon k(s)}\max_{\bar s\in\mathcal{S}_0 }P_{W|S=\bar s}(w)\\
    &\le e^{\varepsilon K}\max_{\bar s\in\mathcal{S}_0 }P_{W|S=\bar s}(w).
    \end{align*}
    Taking the maximum over $s$ yields
    \[
    \max_{s\in\mathcal{S}}P_{W|S=s}(w)
    \le e^{\varepsilon K}\max_{\bar s\in\mathcal{S}_0 }P_{W|S=\bar s}(w).
    \]

    {\color{black}
    Summing over $w$, we can write \begin{align*}
\sum_{w\in\mathcal W} \max_{s\in\mathcal S} P_{W|S=s}(w)
&\le e^{\varepsilon K}\sum_{w\in\mathcal W}\max_{\bar s\in\mathcal S_0} P_{W|S=\bar s}(w) \\
&\le e^{\varepsilon K}\sum_{w\in\mathcal W}\sum_{\bar s\in\mathcal S_0} P_{W|S=\bar s}(w) \\
&= e^{\varepsilon K}\sum_{\bar s\in\mathcal S_0}\sum_{w\in\mathcal W} P_{W|S=\bar s}(w) \\
&= e^{\varepsilon K}\,|\mathcal S_0|.
\end{align*}

}
    Therefore, by the definition of maximal leakage,
    \begin{align*}
    \mathcal L(S\to W)
    &=\log \sum_{w\in\mathcal W}\max_{s\in\mathcal{S}} P_{W|S=s}(w)\\
    &\le \varepsilon K + \log|\mathcal{S}_0 |\\
    &\le \varepsilon \frac{N}{t}(|\mathcal Z|-1) + (|\mathcal Z|-1)\log t \\
    &=(|\mathcal Z|-1)\left(\frac{N\varepsilon}{t}+\log t\right).
    \end{align*}

    We then optimize over $t$, and obtain the following result:
    \begin{itemize}
        \item When $\varepsilon\leq 1/N$, 
            \begin{equation*}
                \mathcal{L}(S\rightarrow W) \leq (|\mathcal{Z}|-1) N\varepsilon,
            \end{equation*}
        which is actually no stronger than~\eqref{eq::leakage_bd}. 

        \item When $1/N<\varepsilon \leq1$, 
            \begin{equation*}
                \mathcal{L}(S\rightarrow W) \leq (|\mathcal{Z}|-1) \log\big(
                e(N\varepsilon  + 1)
                \big),
            \end{equation*}
        which is tighter when $|\mathcal Z|-1<\frac{N\varepsilon}{1+\log(N\varepsilon+1)}$.
    \end{itemize}
\end{proof}
\fi

\ifshortver
\else
\bibliographystyle{IEEEtran}
\bibliography{ref}
\fi

\end{document}